\providecommand{\tabularnewline}{\\}
\numberwithin{equation}{section}
\numberwithin{figure}{section}
\theoremstyle{plain}
\newtheorem{thm}{\protect\theoremname}
  \theoremstyle{definition}
  \newtheorem{defn}[thm]{\protect\definitionname}
  \theoremstyle{plain}
  \newtheorem{lem}[thm]{\protect\lemmaname}
  \theoremstyle{plain}
  \newtheorem{cor}[thm]{\protect\corollaryname}
  \providecommand{\corollaryname}{Corollary}
  \providecommand{\definitionname}{Definition}
  \providecommand{\lemmaname}{Lemma}
\providecommand{\theoremname}{Theorem}
\begin{document}
\global\long\def\E{\mathbb{E}}

\global\long\def\cut#1#2{\text{Cut}_{#1}\left(#2\right)}

\global\long\def\R{\mathbb{R}}

\global\long\def\norm#1{\left\Vert #1\right\Vert }

\global\long\def\defeq{\overset{\text{def}}{=}}

\global\long\def\Ott{\tilde{\tilde{O}}}

\global\long\def\Ot{\tilde{O}}

\global\long\def\OPT{OPT}

\title{Probabilistic Spectral Sparsification In Sublinear Time}

\maketitle
\begin{center}
Yin Tat Lee
\par\end{center}

\begin{center}
yintat@mit.edu
\par\end{center}

\begin{center}
MIT
\par\end{center}
\begin{abstract}
In this paper, we introduce a variant of spectral sparsification,
called probabilistic $(\varepsilon,\delta)$-spectral sparsification.
Roughly speaking, it preserves the cut value of any cut $(S,S^{c})$
with an $1\pm\varepsilon$ multiplicative error and a $\delta\left|S\right|$
additive error. We show how to produce a probabilistic $(\varepsilon,\delta)$-spectral
sparsifier with $O(n\log n/\varepsilon^{2})$ edges in time $\tilde{O}(n/\varepsilon^{2}\delta)$
time for unweighted undirected graph. This gives fastest known sub-linear
time algorithms for different cut problems on unweighted undirected
graph such as 
\begin{itemize}
\item An $\Ot(n/\OPT+n^{3/2+t})$ time $O(\sqrt{\log n/t})$-approximation
algorithm for the sparsest cut problem and the balanced separator
problem.{\small \par}
\item A $n^{1+o(1)}/\varepsilon^{4}$ time approximation minimum s-t cut
algorithm with an $\varepsilon n$ additive error.{\small \par}
\end{itemize}
\end{abstract}

\section{Introduction}

Many cut-based graph problems can be solved approximately in time
$m^{1+o(1)}$, such as the sparsest cut problem, the balanced separator
problem, the minimum s-t cut problem. For dense graphs, we can approximate
graphs by sparse graphs and obtain $O(m)+n^{1+o(1)}$ time approximation
algorithms for different cut-based problems. Unfortunately, in the
era of big data, many dense graphs are too large to process explicitly,
such as distance matrices in machine learning. It is natural to ask
whether it is possible to approximately solve cut-based graph problems
on these graphs in sublinear time.

\subsection{Previous results on sublinear time algorithm for optimization problems}

There are many results on estimating the optimum value of various
combinatorial problems in sublinear time, such as maximum matching
\cite{nguyen2008constant,yoshida2009improved}, minimum vertex cover
\cite{marko2006distance,parnas2007approximating,yoshida2009improved}
and minimum set cover \cite{nguyen2008constant,yoshida2009improved}.
Many of these algorithms simulate \cite{nguyen2008constant} classical
approximation algorithms using local information and transform the
classical algorithms into constant-time algorithms. The running time
of these constant-time algorithms usually depends exponentially on
the maximum degree of the graph and the additive error $\delta$.
Unfortunately, there has been little progress for dense graphs because
of the limitation of this simulation approach. The only result for
dense graphs we aware of is an $\Ot(n\cdot poly(1/\varepsilon))$
time algorithm for finding an factor-2 approximation of the size of
a maximum vertex cover within an extra $\varepsilon n$ additive error
\cite{onak2012near}.

Instead of using the simulation approach, we suggest another principled
way to obtain sublinear time algorithms - sparsification.

\subsection{Sparsification}

In this work, we heavily use the concept of sparsification from the
spectral graph theory. Benczúr and Karger \cite{benczur1996approximating}
introduced the notation of cut sparsification for solving cut-based
problem on dense graphs, but it is not designed for sublinear time
algorithms. A graph $H$ is called a cut sparsifier of $G=(V,E,\omega)$
if $H$ is a sparse graph on $V$ such that the cut value of any cut
in $H$ is within a factor of $(1\pm\varepsilon)$ of its value in
$G$. In other words, for all characteristic vectors $x\in\{0,1\}^{V}$,
we have 
\begin{equation}
\sum_{u\sim v}\tilde{\omega}_{uv}(x(u)-x(v))^{2}\in(1\pm\varepsilon)\sum_{u\sim v}\omega_{uv}(x(u)-x(v))^{2}\label{eq:sparsifier_req}
\end{equation}
where $\omega$ and $\tilde{\omega}$ are the weights of edges in
graph $G$ and $H$ respectively. They proved that sampling a graph
with certain probability gives a cut sparsifier and the sampling probability
can be computed in time $\Ot(m)$. This gives an $\Ot(m)$ time algorithm
to find a cut sparsifier with $\Ot(n/\varepsilon^{2})$ edges. \cite{karger1997using,karger1998better,karger1998finding}
used the cut sparsification to obtain various fast algorithms for
the minimum s-t cut problem and the maximum flow problem for dense
graphs. Besides this, cut sparsification has many other applications
because of its strong guarantee. Of particular relevance to this paper,
M\k{a}dry \cite{madry2010fast} used the cut sparsification as one
of the essential components to give a way to reduce cut problems on
general graph to some almost trees and obtained almost linear time
algorithms for many cut problems.

Inspired by the cut sparsification, Spielman and Teng \cite{spielman2011spectral}
defined the notation of spectral sparsification, which is a stronger
notation of sparsification. It requires the graph $H$ satisfies (\ref{eq:sparsifier_req})
for all vectors $x\in\mathbb{R}^{V}$. From numerical perspective,
it is same as requiring the Laplacian of the graph $H$ is a good
preconditioner of the Laplacian of the graph $G$. So, many equations
related to the Laplacian $G$, such as, Laplacian equation, eigenvalue
problem, heat equation, random walk, can be solved in the graph $H$
within a certain error. Spielman and Srivastava \cite{spielman2011graph}
showed that spectral sparsifiers can be found by sampling the graph
with probability proportional to effective resistances. And they presented
an algorithm to estimate effective resistances in time $\Ot(m)$ using
nearly linear time Laplacian solver \cite{koutis2011nearly,kelner2013simple,lee2013efficient}.

Although there are a lot of results for the streaming model \cite{goel2010graph,kelner2011spectral,goel2012single},
there is no sublinear time algorithm because it is apparently impossible.

\subsection{Our contribution}

Motivated by the sublinear time problem and the spectral graph theory,
we introduce a variant of spectral sparsification \cite{spielman2011spectral}
that we call probabilistic spectral sparsification. Given an unweighted
graph $G=(V,E)$, a probabilistic $(\varepsilon,\delta)$-spectral
sparsifier of the graph $G$ is a weighted random graph $\tilde{G}=(V,\tilde{E},\tilde{\omega})$
on the vertex set $V$ such that
\begin{enumerate}
\item Lower Bound: We have 
\begin{equation}
(1-\varepsilon)\sum_{(x,y)\in E}\left(u(x)-u(y)\right)^{2}\leq\sum_{(x,y)\in\tilde{E}}\tilde{\omega}(x,y)\left(u(x)-u(y)\right)^{2}\quad\text{for all }u\in\mathbb{R}^{V}.\label{eq:lower_bound}
\end{equation}

\item Upper Bound %
\footnote{In this paper, high probability means a constant probability sufficiently
close to $1$.%
}: For all $u\in\mathbb{R}^{V}$, we have
\begin{equation}
\sum_{(x,y)\in\tilde{E}}\tilde{\omega}(x,y)\left(u(x)-u(y)\right)^{2}\leq(1+\varepsilon)\sum_{(x,y)\in E}\left(u(x)-u(y)\right)^{2}+\delta\norm u_{2}^{2}\quad\text{with high probability}.\label{eq:upper_bound}
\end{equation}

\end{enumerate}
It seems to us that standard matrix concentration bound can at best
give bounds like $\delta\sum d(x)u^{2}(x)$ and there are results
\cite{frieze1999quick,frieze2000edge} on this line concerning fast
approximate general matrix without paying $\tilde{O}\left(m\right)$
time to compute effective resistances. However, the guarantee $\delta\sum d(x)u^{2}(x)$
can be $n$ times worse than $\delta\norm u_{2}^{2}$ for dense matrices
and it is not good enough for certain applications such as the sparsest
cut problem.

In this paper, we show how to construct a probabilistic $(\varepsilon,\delta)$-spectral
sparsifier with $\Ot(n/\varepsilon^{2})$ edges in time $\Ot\left(n/\varepsilon^{2}\delta\right)$.
We avoid the matrix concentration bound by using graph structures
and obtain this almost tight result. As a result, this transforms
many cut problems on dense graphs into sparse graphs and hence gives
sublinear algorithms on a bunch of cut-based problems. We illustrate
the applicability of our sparsification on the following fundamental
cut-based graph problems
\begin{itemize}
\item An $\Ot(n/\OPT+n^{3/2+t})$ time $O(\sqrt{\log n/t})$-approximation
algorithm for the sparsest cut problem and the balanced separator
problem.
\item An $\Ot\left(n/\OPT+2^{k}n^{1+1/(3\cdot2^{k}-1)+o(1)}\right)$ time
$O\left(\log^{\left(1+o(1)\right)\left(k+1/2\right)}n\right)$-approximation
algorithm for the sparsest cut problem and the balanced separator
problem.
\item An $\Ot(\sqrt{mn}/\varepsilon^{3})$ time and a $n^{1+o(1)}/\varepsilon^{4}$
time approximation minimum s-t cut algorithm with an $\varepsilon n$
additive error.
\end{itemize}
This sparsifier is a weaker notion than the spectral sparsification
introduced by Spielman and Teng \cite{spielman2011spectral}, which
requires a single graph to satisfy both upper and lower bounds with
$\delta=0$. To justify our notion, we show that it takes at least
$\Omega\left(n/\varepsilon^{2}+n/\delta\right)$ time to find this
sparsifier and hence the extra additive term is unavoidable. Furthermore,
we show in Theorem \ref{thm:lowerbound-1} that the term $n/\OPT$
in the running time shown above is unavoidable for the sparsest cut
problem.

\subsection{Definitions}

Let $[n]=\{1,2,\cdots n\}$. The notation $\Ot(f(n))$ means $O(f(n)\log^{c}(n))$
for some constant $c$ and $\Ott(f(n))$ means $O(f(n)\log^{c}\log(n))$
for some constant $c$. Let $G$ be a weighted undirected graph with
$n$ vertices and $m$ edges with weights $\omega$. We write $(u,v)\in G$
if the vertex $u$ is adjacent to the vertex $v$ in the graph $G$.
Let the neighborhood of $v$ be $N_{G}(v)\defeq\{u:(u,v)\in G\}$.
Let $d_{G}(u)$ be the weighted degree of the vertex $u$, that is
$d_{G}(u)=\sum_{(u,v)\in G}\omega(u,v)$. The cut value of $U$ is
defined by $\cut GU=\sum_{(u,v)\in G,u\in U,v\notin U}\omega(e)$. 
\begin{defn}
Given a weighted undirected graph $G$, we view the graph $G$ as
an electric network and define the resistance of an edge $(s,t)$
is $1/\omega(s,t)$. The effective resistance $R(s,t)$ is the potential
difference between $s$ and $t$ when there is a unit flow send from
$s$ to $t$ on this electric network.
\end{defn}

\begin{defn}
(General Graph Model) In the general graph model, a graph $G=(V,E)$
is represented by the number of vertices $n$ and three oracles 
\begin{enumerate}
\item The vertex oracle $\mathcal{O}_{1}:[n]\rightarrow V$ which returns
the $i$-th vertex of the graph.
\item The degree oracle $\mathcal{O}_{2}:V\rightarrow\mathbb{Z}^{+}$ which
returns the degree $d(v)$.
\item The edge oracle $\mathcal{O}_{3}:V\times\mathbb{Z}^{+}\rightarrow V$
which returns the $i$-th vertex adjacent to $v$.
\end{enumerate}
\end{defn}

\section{Probabilistic Spectral Sparsification}

In this section, we show how to construct probabilistic spectral sparsifiers
in sublinear time. The algorithm is inspired by the following two
results about effective resistance. Spielman and Srivastava \cite{spielman2011graph}
shows that sampling edges proportional to the effective resistances
of edges produce a spectral sparsifier. It is known that on an unweighted
expander, we have \cite{lovasz1993random}
\begin{equation}
R(s,t)=\Theta\left(\frac{1}{d(s)}+\frac{1}{d(t)}\right)\label{eq:eff_est}
\end{equation}
for any edge $(s,t)$. These two results show that we can construct
spectral sparsifiers for expanders according to the degree of vertices.
Therefore, if we can transform a graph into an expander by modifying
only some edges, then we can obtain a spectral sparsifier with small
additive error. Unfortunately, it requires modifying $O(m)$ edges
which is too large for certain problems. Instead of satisfying the
expander condition for (\ref{eq:eff_est}), we show how to make a
graph satisfies (\ref{eq:eff_est}) directly by adding only a few
edges. To do this, we randomly select a subset of the graph and put
a sparse expander on this subset. In Lemma \ref{lem:graph_smoothing},
we show that the effective resistances in this new graph satisfies
the estimate (\ref{eq:eff_est}). This gives an algorithm to construct
probabilistic spectral sparsifiers.

In this paper, the only property of expander we used is that there
are lots of edge-disjoint short paths in an expander.
\begin{thm}
\label{thm: expander}\cite{lubotzky1988ramanujan,frieze2000edge}There
is an $O(n)$ time algorithm to construct a graph $E_{n}$ such that
\begin{enumerate}
\item It has $\Theta(n)$ vertices, $O(n)$ edges and the maximum degree
is $\Theta(1)$.
\item For any pairs $\{(a_{i},b_{i})\}_{i=1}^{k}$ with $k=O(\frac{n}{\log n})$,
there exists edge-disjoint paths of length $O(\log n)$ in $E_{n}$
joining $a_{i}$ to $b_{i}$.
\end{enumerate}
\end{thm}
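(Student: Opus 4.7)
The plan is in two parts. For item~1, I would invoke an explicit construction of constant-degree expanders. The cleanest choice is the Lubotzky-Phillips-Sarnak Ramanujan graph construction, which gives a $(p+1)$-regular graph on $\Theta(n)$ vertices whose nontrivial adjacency eigenvalues have magnitude at most $2\sqrt{p}$. Each vertex's constantly many neighbors are produced by a constant-time arithmetic rule, so the entire adjacency structure can be written down in $O(n)$ time. Any other explicit constant-degree expander (Margulis-Gabber-Galil, or the zig-zag product of Reingold-Vadhan-Wigderson) would serve the same purpose.

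For the routing property (item~2), I would follow a Frieze-style edge-disjoint paths argument. The key input is the expander mixing lemma: for any vertex sets $S, T \subseteq V$, the number of edges $e(S,T)$ deviates from $d|S||T|/n$ by at most $\lambda \sqrt{|S||T|}$. This implies that a random walk of length $\Theta(\log n)$ mixes to near-uniform on the expander. For each pair $(a_i, b_i)$ I would construct a candidate path by \emph{meeting in the middle}: take independent random walks of length $O(\log n)$ from $a_i$ and from $b_i$; mixing guarantees that with constant probability these walks share a vertex, yielding a concatenated path of length $O(\log n)$ connecting $a_i$ to $b_i$. To force the $k = O(n / \log n)$ paths to be edge-disjoint, the expected load on any edge is $O(1)$, since there are $k \cdot O(\log n) = O(n)$ total edge-uses spread essentially uniformly over $\Theta(n)$ edges. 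A Lov\'asz Local Lemma argument, or an iterative re-routing that deletes already-used edges and reruns the mixing argument on the residual graph, then yields a deterministic family of edge-disjoint paths.

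The main obstacle is the tight budget: since $k \cdot O(\log n)$ is within a constant factor of the total edge count, the argument cannot afford large constants in the concentration or LLL step. Concretely, one must show that after removing $O(n)$ edges along the already-constructed paths the graph still has enough expansion to route the remaining pairs, which requires exploiting the strong pseudorandomness of the Ramanujan graph rather than using its spectral gap crudely. An alternative that sidesteps the inductive re-routing is to first build a fractional multicommodity flow routing every pair along paths of length $O(\log n)$ with edge congestion $O(1)$ (using expansion to bound the LP dual), and then apply Raghavan-Thompson randomized rounding; this reduces the analysis to Chernoff plus a union bound, but still relies on the expansion being strong enough to carry all $k$ commodities simultaneously.
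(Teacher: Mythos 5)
Your plan for item~1 is fine and matches the standard citation: an explicit Ramanujan graph (Lubotzky--Phillips--Sarnak), or any explicit constant-degree expander, gives the required $\Theta(n)$-vertex, $O(n)$-edge, $\Theta(1)$-degree graph in $O(n)$ time, and the paper indeed cites exactly this. The paper gives no proof of the theorem; it cites \cite{lubotzky1988ramanujan} for the construction and \cite{frieze2000edge} for the routing property, so the comparison must be against Frieze's argument.

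The core of your item~2 argument has a concrete gap: two independent random walks of length $O(\log n)$ on an $n$-vertex constant-degree expander do \emph{not} collide with constant probability. After $O(\log n)$ steps each walk is essentially uniform over the $\Theta(n)$ vertices, so the probability that any of the $O(\log n)$ positions of one walk equals any of the $O(\log n)$ positions of the other is $O(\log^2 n / n) = o(1)$, not $\Omega(1)$; you would need walks of length $\Omega(\sqrt{n})$ to get a constant birthday-paradox collision probability, which destroys both the $O(\log n)$ length guarantee and the $k\cdot O(\log n) = O(n)$ budget. Your fallback, fractional multicommodity flow followed by Raghavan--Thompson rounding, also does not close the gap: when the fractional congestion is $O(1)$, randomized rounding gives integral congestion $\Theta(\log n / \log\log n)$ with high probability, not the congestion-$1$ (edge-disjoint) routing the theorem demands. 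Similarly, the LLL sketch does not apply as stated: with expected per-edge load $\Theta(1)$, the event ``some edge carries two paths'' is not a low-probability event, so one cannot make the bad events rare enough for a local-lemma union.

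What Frieze's argument actually does is qualitatively different from ``meet in the middle.'' Roughly, one first uses random walks of length $\Theta(\log n)$ to spread the endpoints $a_i$ and $b_i$ out to essentially random vertices $a_i'$ and $b_i'$ (this step only needs each walk's \emph{marginal} to be near-uniform, not a collision), and then establishes, by a separate flow/expansion argument, that a random set of $O(n/\log n)$ source--sink pairs can be connected by short edge-disjoint paths in the remaining graph; the two bundles of walks and the middle routing are then glued and made edge-disjoint by an iterative re-routing argument that carefully controls how much expansion survives edge removal. Your intuition that the tight $O(n)$ budget forces one to re-route on a residual graph and exploit strong pseudorandomness is correct, but the mechanism you propose for producing the initial paths does not generate candidate paths at all, so the re-routing step has nothing to start from. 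To repair the proposal you would need to replace the collision step with an endpoint-spreading step plus a genuine short-edge-disjoint-routing lemma for random pairs, which is precisely the technical heart of \cite{frieze2000edge}.
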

The following key lemma shows that putting $E_{n}$ in a random subset
of $G$ makes the graph satisfies (\ref{eq:eff_est}).
\begin{lem}
\label{lem:graph_smoothing}Assume $\delta\leq1/\log n$. Given an
unweighted undirected graph $G=(V,E)$. Let $E_{\delta n}$ be the
graph given by Theorem \ref{thm: expander} and $V_{\delta}$ be a
random subset of $V$ with size $\left|E_{\delta n}\right|$. We view
$E_{\delta n}$ as a graph on $V_{\delta}$ and let $\tilde{G}$ be
the union of $G$ and $E_{\delta n}$. With high probability, for
any edge $(s,t)$, we have
\[
\frac{1}{2}\left(\frac{1}{d_{\tilde{G}}(s)}+\frac{1}{d_{\tilde{G}}(t)}\right)\leq R_{\tilde{G}}(s,t)\leq O\left(\frac{\log n}{\delta}\left(\frac{1}{d_{\tilde{G}}(s)}+\frac{1}{d_{\tilde{G}}(t)}\right)\right).
\]
\end{lem}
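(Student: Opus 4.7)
The plan is to bound $R_{\tilde{G}}(s,t)$ using the electrical-flow characterization $R(s,t) = \min_f \sum_e f_e^2$, where the minimum is over unit $s$-$t$ flows. The lower bound and the upper bound will come from different flow arguments.

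For the lower bound, I would note that for any unit $s$-$t$ flow $f$, the edges incident to $s$ carry total flow $1$ spread over $d_{\tilde{G}}(s)$ edges, so Cauchy-Schwarz gives $\sum_{e\ni s}f_e^2 \ge 1/d_{\tilde{G}}(s)$, and analogously at $t$. Since each edge is incident to at most two vertices, $\sum_e f_e^2 \ge \tfrac{1}{2}(\sum_{e\ni s}f_e^2 + \sum_{e\ni t}f_e^2) \ge \tfrac{1}{2}(1/d_{\tilde{G}}(s)+1/d_{\tilde{G}}(t))$, which holds deterministically for every edge.

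For the upper bound I would construct an explicit unit flow. The first preparatory step is a concentration lemma: writing $A_v := N_G(v)\cap V_\delta$ and using that $|V_\delta| = \Theta(\delta n)$, a Chernoff bound for sampling without replacement (the hypergeometric distribution) together with a union bound over the $n$ vertices shows that with high probability, every vertex $v$ with $d_G(v) \ge C\log n/\delta$ satisfies $|A_v| \ge c\,\delta\, d_G(v)$ for absolute constants $c,C$. Condition on this event for the rest of the argument. Now fix an edge $(s,t)$. If either $d_{\tilde{G}}(s)$ or $d_{\tilde{G}}(t)$ is at most $C\log n/\delta$, then the right-hand side of the desired inequality is $\Omega(1)$, and the trivial unit flow along the direct edge $(s,t)\in \tilde{G}$ has energy $1$, which suffices. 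Otherwise, both $|A_s|$ and $|A_t|$ are at least $\Omega(\delta\, d_{\tilde{G}}(s))$ and $\Omega(\delta\, d_{\tilde{G}}(t))$ respectively. Set
\[
k \;:=\; \min\!\bigl(|A_s|,\,|A_t|,\,c'\,\delta n/\log n\bigr),
\]
choose subsets $B_s\subseteq A_s$, $B_t\subseteq A_t$ of size $k$, and invoke Theorem \ref{thm: expander} applied to $E_{\delta n}$ to obtain edge-disjoint paths $P_1,\dots,P_k$ of length $O(\log n)$ inside $E_{\delta n}$ pairing $B_s$ with $B_t$. The flow sends $1/k$ units from $s$ along each edge $(s,a_i)$, then $1/k$ units along each edge of $P_i$, then $1/k$ units along each edge $(b_i,t)$. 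Its total energy is $k\cdot(1/k)^2 + k\cdot O(\log n)\cdot(1/k)^2 + k\cdot(1/k)^2 = O(\log n/k)$.

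The last step is to verify $O(\log n/k) \le O(\log n/\delta)\bigl(1/d_{\tilde{G}}(s)+1/d_{\tilde{G}}(t)\bigr)$ in each regime: if the minimum defining $k$ is $|A_s|$, then $1/k \le O(1/(\delta d_{\tilde{G}}(s))) \le O(1/\delta)(1/d_{\tilde{G}}(s)+1/d_{\tilde{G}}(t))$, symmetrically for $|A_t|$; and if $k = c'\delta n/\log n$ then both $|A_s|,|A_t|$ exceed this, so $d_{\tilde{G}}(s), d_{\tilde{G}}(t) = \Omega(n/\log^2 n)$, and the bound still goes through. The main obstacle I expect is keeping the concentration bound sharp enough uniformly over all vertices (so that every relevant degree yields the target number $\Theta(\delta d(v))$ of sampled neighbors), and checking that capping $k$ at $c'\delta n/\log n$ --- as forced by the $O(n/\log n)$ limit in Theorem \ref{thm: expander} --- is compatible with the target bound in the dense regime where $d_G(s)$ or $d_G(t)$ is nearly $n$.
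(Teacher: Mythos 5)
Your proposal follows essentially the same route as the paper: the lower bound is the standard Cauchy--Schwarz / flow argument (the paper simply cites Lov\'asz's survey for it), the concentration claim is proved the same way (hypergeometric Chernoff bound plus a union bound over the $n$ neighborhoods), and the upper bound uses the same case split (small degree handled by $R\le1$; large degree handled by routing through the expander, which is exactly Rayleigh monotonicity in flow language). So the comparison is: same approach, spelled out in slightly more detail.

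The one place you and the paper diverge in tone is the capping issue, and you are right to flag it as the weak point. The paper dispatches it with the sentence ``Since $\delta d(s)\le n/\log n$, Theorem 1 shows\dots,'' but that is not the correct constraint: Theorem \ref{thm: expander} applied to $E_{\delta n}$ (which has $\Theta(\delta n)$ vertices) requires the number of requested pairs to be $O(\delta n/\log(\delta n))$, not $O(n/\log n)$. Your choice of $k=\min(|A_s|,|A_t|,c'\delta n/\log n)$ handles this correctly, but your dismissal of the capped regime is too quick. Note first that $|A_v|\le d_G(v)$ only gives $d(s),d(t)\ge c'\delta n/\log n$, which with $\delta\le1/\log n$ is at most $n/\log^2 n$ and is in any case the wrong direction: you need an upper bound on the degrees (to make the target side large), not a lower bound. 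In the extreme case $d(s),d(t)=\Theta(n)$, your flow energy is $O(\log n/k)=O(\log^2 n/(\delta n))$ while the target $O(\log n/\delta)(1/d(s)+1/d(t))$ is only $\Theta(\log n/(\delta n))$, so the comparison loses a $\log n$ factor. Closing this would require using the expander more efficiently than via $O(\delta n/\log\delta n)$ edge-disjoint $O(\log\delta n)$-length paths---for instance, exploiting the $\Theta(\delta n)$ parallel expander edges crossing any balanced cut of $V_\delta$. Since the paper's own wording glosses over the same point, this is a concern inherited from the source, and in any case it costs only polylogarithmic factors downstream; but the claim ``the bound still goes through'' in your write-up does need the missing verification.
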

\begin{proof}
Claim: With high probability, for any vertex $v$ with $d_{\tilde{G}}(v)=\Omega\left(\log n/\delta\right)$,
\[
\left|V_{\delta}\cap N_{G}(v)\right|=\Omega\left(\delta d_{\tilde{G}}(v)\right).
\]
Assume the claim. Let $(s,t)$ be any edge. Write $d_{\tilde{G}}(s)$
as $d(s)$ and $d_{\tilde{G}(t)}$ as $d(t)$ for simplicity. Since
the effective resistance of an edge is bounded by 1 for unweighted
graph, if $d(s)$ or $d(t)$ is at most $O\left(\log n/\delta\right)$,
we have
\begin{eqnarray*}
R_{\tilde{G}}(s,t) & \leq & 1=O\left(\frac{\log n}{\delta}\left(\frac{1}{d(s)}+\frac{1}{d(t)}\right)\right).
\end{eqnarray*}
Hence, we can assume both $d(s)$ and $d(t)$ is at least $\Omega(\log n/\delta)$.
The claim shows that there are at least $\Omega\left(\delta d(s)\right)$
vertices of $V_{\delta}$ in the neighbor $N_{G}(s)$ of $s$ and
at least $\Omega\left(\delta d(t)\right)$ for $t$. Since $\delta d(s)\leq n/\log n$,
Theorem \ref{thm: expander} shows that there are $\Omega\left(\delta\min\left(d(s),d(t)\right)\right)$
edge-disjoint paths with length $O(\log n)$ joining these neighbor
of $s$ to these neighbor of $t$. By Rayleigh's Monotonicity Principle,
the effective resistance between $s$ and $t$ is less than the graph
with only $\Omega\left(\delta\min\left(d(s),d(t)\right)\right)$ edge-disjoint
paths from $s$ to $t$ with length $O(\log n)$. Hence, we have
\[
R_{\tilde{G}}(s,t)=O\left(\frac{\log n}{\delta\min\left(d(s),d(t)\right)}\right)=O\left(\frac{\log n}{\delta}\left(\frac{1}{d(s)}+\frac{1}{d(t)}\right)\right).
\]
Therefore, in both case, we have
\[
R_{\tilde{G}}(s,t)\leq O\left(\frac{\log n}{\delta}\left(\frac{1}{d(s)}+\frac{1}{d(t)}\right)\right).
\]
Another side of the inequality comes from \cite{lovasz1993random}.

Proof of the claim: Let $U$ be any subset of $V$ with $k$ elements.
Note that $X=\left|V_{\delta}\cap U\right|$ is a random variable
with hypergeometric distribution. The Chernoff bound for hypergeometric
distribution \cite[Thm 1.17]{doerr2011analyzing} shows that $\mathbb{P}\left(X\leq\frac{1}{2}\mathbb{E}(X)\right)\leq\left(\frac{2}{e}\right)^{\mathbb{E}(X)/2}.$
For $k=\Omega\left(\log n/\delta\right)$, we have $\mathbb{E}(X)=\delta k=\Omega\left(\log n\right)$
and hence $\mathbb{P}\left(X\leq\frac{\delta k}{2}\right)=\frac{1}{\text{poly}(n)}$.
Since there are only $n$ neighbor sets $N_{G}(v)$, union bound shows
that with high probability, for any $v\in V$ with $d_{G}(v)=\Omega\left(\log n/\delta\right)$,
we have 
\begin{eqnarray*}
\left|V_{\delta}\cap N_{G}(v)\right| & = & \Omega\left(\delta d_{G}(v)\right)=\Omega\left(\delta d_{\tilde{G}}(v)\right)
\end{eqnarray*}
where the last line comes from the fact that the maximum degree of
$E_{\delta n}$ is $O(1)$. 
\end{proof}
Having a good estimate of effective resistances, we could use the
following algorithm proposed by Spielman and Srivastava \cite{spielman2011graph}
to construct a spectral sparsifier of $\tilde{G}$.

\begin{center}
\begin{tabular}{|l|}
\hline 
$H=\mathbf{Sparsify}(G,p,q)$\tabularnewline
\hline 
\hline 
1. Repeat $q$ times:\tabularnewline
\hline 
1a. Sample an edge $e$ from $G$ with probability $p(e)$.\tabularnewline
\hline 
1b. Add it to $H$ with weight $\left(qp(e)\right)^{-1}$.\tabularnewline
\hline 
\end{tabular}
\par\end{center}
\begin{thm}
\label{thm: mul_spasifier}\cite{spielman2011graph} Let $G$ be an
unweighted undirected graph. Suppose $p(e)$ are numbers such that
$\sum p(e)=1$ and
\[
p(e)\geq\frac{R(e)}{\alpha n}
\]
for some $\alpha>0$. Then, with high probability, $\mathbf{Sparsify}\left(G,p,\Theta\left(\alpha n\log n/\varepsilon^{2}\right)\right)$
is a $\varepsilon$-spectral sparsifier with $O(\alpha n\log n/\varepsilon^{2})$
edges in time $O(\alpha n\log n/\varepsilon^{2})$.
\end{thm}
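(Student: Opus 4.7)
\textbf{The plan} is to reduce the spectral sparsifier inequality to matrix concentration for a sum of i.i.d.\ rank-one matrices. Write the Laplacian of $G$ as $L_{G}=\sum_{e\in E}b_{e}b_{e}^{T}$, where $b_{e}\in\R^{V}$ is the signed incidence vector of $e=(u,v)$. The output of $\mathbf{Sparsify}$ has Laplacian $\tilde{L}=\frac{1}{q}\sum_{i=1}^{q}\frac{1}{p(e_{i})}b_{e_{i}}b_{e_{i}}^{T}$, with the $e_{i}$ drawn i.i.d.\ from $p$. Since every summand has $\mathbf{1}$ in its kernel, both $L_{G}$ and $\tilde{L}$ vanish on $\mathbf{1}$, so it suffices to establish the spectral-norm bound $\norm{L_{G}^{+/2}\tilde{L}L_{G}^{+/2}-\Pi}\le\varepsilon$, where $\Pi\defeq L_{G}^{+/2}L_{G}L_{G}^{+/2}$ is the orthogonal projection onto $\text{range}(L_{G})$; conjugating back by $L_{G}^{1/2}$ then yields $(1-\varepsilon)L_{G}\preceq\tilde{L}\preceq(1+\varepsilon)L_{G}$ on the common range, which is the $\varepsilon$-spectral sparsifier condition.

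Next I would set $v_{e}\defeq L_{G}^{+/2}b_{e}$, so that $L_{G}^{+/2}\tilde{L}L_{G}^{+/2}=\frac{1}{q}\sum_{i=1}^{q}X_{i}$ with $X_{i}\defeq\frac{1}{p(e_{i})}v_{e_{i}}v_{e_{i}}^{T}$. Each $X_{i}$ is positive semidefinite, and the classical identity $\norm{v_{e}}_{2}^{2}=b_{e}^{T}L_{G}^{+}b_{e}=R(e)$ yields both the expectation
\[
\E X_{i}=\sum_{e\in E}p(e)\cdot\frac{v_{e}v_{e}^{T}}{p(e)}=\sum_{e\in E}v_{e}v_{e}^{T}=\Pi
\]
and the uniform operator-norm bound $\norm{X_{i}}=R(e_{i})/p(e_{i})\le\alpha n$, using exactly the hypothesis $p(e)\ge R(e)/(\alpha n)$.

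These two ingredients are precisely what is required by the standard matrix Chernoff bound (equivalently Rudelson's sampling inequality) for an i.i.d.\ sum of PSD matrices: with $M\defeq\max_{i}\norm{X_{i}}\le\alpha n$ and $\norm{\Pi}\le1$, one gets
\[
\mathbb{P}\!\left(\norm{\tfrac{1}{q}\sum_{i=1}^{q}X_{i}-\Pi}>\varepsilon\right)\le2n\exp\!\left(-\Omega(\varepsilon^{2}q/M)\right).
\]
Choosing $q=\Theta(\alpha n\log n/\varepsilon^{2})$ makes the right-hand side $1/\text{poly}(n)$, which is far stronger than the ``high probability'' demanded in the statement. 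The output has at most $q=O(\alpha n\log n/\varepsilon^{2})$ edges by construction, and each of the $q$ iterations of $\mathbf{Sparsify}$ performs only one sample from $p$ and one insertion, so the total running time is also $O(q)$ once $p$ is available in sampleable form.

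The main obstacle is the matrix concentration step: the exponent scales with $\max_{i}\norm{X_{i}}$ rather than with any variance proxy, which is precisely why the effective-resistance normalization $p(e)\ge R(e)/(\alpha n)$ is the tight sampling rule; any coarser choice either inflates the per-sample operator norm or forces more samples. Once a matrix Chernoff bound of the above form is quoted, the rest is mechanical linear algebra, so I would expect the proof (or the citation \cite{spielman2011graph}) to amount to the two identities $\norm{v_{e}}_{2}^{2}=R(e)$ and $\sum_{e}v_{e}v_{e}^{T}=\Pi$ together with an invocation of the concentration inequality.
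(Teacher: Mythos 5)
The paper does not prove this theorem; it is quoted verbatim from Spielman and Srivastava \cite{spielman2011graph}. Your reconstruction is the standard argument from that reference — whitening by $L_{G}^{+/2}$, the identity $\norm{L_{G}^{+/2}b_{e}}_{2}^{2}=R(e)$, $\E X_{i}=\Pi$, the per-sample norm bound $R(e)/p(e)\le\alpha n$, and a matrix Chernoff/Rudelson bound with $q=\Theta(\alpha n\log n/\varepsilon^{2})$ — and it is correct.
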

Since the algorithm $\mathbf{Sparsify}$ cannot provide the optimal
sparsity when $\alpha\gg1$, we will use the spectral sparsification
algorithm proposed by Koutis, Levin and Peng \cite{koutis2012faster}
to further sparsify the graph at the end.
\begin{thm}
\label{thm: fast_spasifier}\cite{koutis2012faster} There is a spectral
sparsification algorithm, we call $\mathbf{FastSparsify}\left(G\right)$,
that produces a $\varepsilon-$spectral sparsifier with $O(n\log n/\varepsilon^{2})$
edges in time $\Ott(m\log^{2}n\log(1/\varepsilon))$ with high probability.
\end{thm}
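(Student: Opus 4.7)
The plan is to combine the effective-resistance sampling framework of Theorem \ref{thm: mul_spasifier} with a fast way of producing constant-factor estimates of all effective resistances at once. By Theorem \ref{thm: mul_spasifier}, once we have numbers $\tilde p(e) \geq R(e)/(\alpha n)$ with $\alpha = O(1)$ summing to $1$, sampling $\Theta(n\log n/\varepsilon^{2})$ edges gives an $\varepsilon$-spectral sparsifier with $O(n\log n/\varepsilon^{2})$ edges in $O(n\log n/\varepsilon^{2})$ additional time. It therefore suffices to produce such estimates in time $\Ott(m\log^{2}n\log(1/\varepsilon))$.

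First I would use the Spielman--Srivastava identity $R(u,v) = \norm{W^{1/2}BL^{+}(b_{u}-b_{v})}_{2}^{2}$, where $B$ is the signed vertex-edge incidence matrix, $W$ the diagonal edge-weight matrix, and $L^{+}$ the Moore--Penrose pseudoinverse of the Laplacian. Applying the Johnson--Lindenstrauss lemma reduces the task to computing $Y = QW^{1/2}BL^{+}$ for a random $\pm1$ matrix $Q\in\R^{k\times m}$ with $k = O(\log n)$; a union bound over the $m$ edges then guarantees that $\norm{Y(b_{u}-b_{v})}_{2}^{2}$ is a constant-factor estimate of $R(u,v)$ simultaneously for every edge $(u,v)$.

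Second, I would compute $Y$ one row at a time: each row is the solution of a Laplacian system $Lx = v$ with an explicitly computable right-hand side, so using any of the nearly linear time Laplacian solvers \cite{koutis2011nearly,kelner2013simple,lee2013efficient} each row costs $\Ott(m\log(1/\varepsilon'))$ for target precision $\varepsilon'$. Setting $\varepsilon'$ polynomially small in $\varepsilon/n$, the $O(\log n)$ solves take $\Ott(m\log n\log(1/\varepsilon))$ time in total. Combined with the sampling step of Theorem \ref{thm: mul_spasifier}, this gives the claimed running time.

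The main obstacle is bookkeeping the interaction of the two sources of error: one must verify that substituting the solver's approximate solutions for the exact ones still gives $\tilde R(e) \in [\Omega(1), O(1)]\cdot R(e)$ uniformly, and that the normalized probabilities $\tilde p(e) \propto \tilde R(e) w(e)$ satisfy the hypothesis of Theorem \ref{thm: mul_spasifier} with $\alpha = O(1)$. The identity $\sum_{e} R(e)w(e) = n-1$ makes the normalization automatic, and the constant-factor slack in the Johnson--Lindenstrauss bound easily absorbs the solver error, so after these routine checks the advertised sparsity, approximation quality, and running time all follow.
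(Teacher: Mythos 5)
The paper does not prove this theorem at all; it is imported verbatim from \cite{koutis2012faster} and used as a black box, so there is no in-paper argument to compare against. What you have written is not the Koutis--Levin--Peng construction but the earlier Spielman--Srivastava algorithm of \cite{spielman2011graph} (effective-resistance estimation via Johnson--Lindenstrauss plus Laplacian solves), and the running-time bookkeeping does not close. Two concrete problems. First, none of the solvers you cite runs in $\Ott(m\log(1/\varepsilon'))$ per call: the $\Ott$ notation in this paper only suppresses $\log\log$ factors, and \cite{koutis2011nearly} costs $\Ott(m\log n\log(1/\varepsilon'))$ per solve. Second, the solver precision $\varepsilon'$ cannot be taken to be merely $\mathrm{poly}(\varepsilon)$: the $L$-norm error the solver permits is relative to $\norm{z_i}_L$ where $z_i = L^{+}B^{T}W^{1/2}q_i$ and $\norm{z_i}_L \leq \sqrt{m}$, so the error it contributes to the $i$-th coordinate of $Y(b_u-b_v)$ is only bounded by $\varepsilon'\sqrt{m\,R(u,v)}$, while the true coordinate has magnitude about $\sqrt{R(u,v)/\log n}$. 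To recover constant-factor estimates of every $R(u,v)$ one must take $\varepsilon' = \mathrm{poly}(1/m)$, hence $\log(1/\varepsilon') = \Theta(\log n)$.

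Combining these, your $O(\log n)$ solves cost $\Ott(m\log^{3}n)$ even for constant $\varepsilon$, a full $\log n$ factor worse than the claimed $\Ott(m\log^{2}n\log(1/\varepsilon))$. Removing exactly that factor is the point of \cite{koutis2012faster}: the algorithm there never estimates effective resistances through Laplacian solves, but instead builds a short chain of graphs in which crude sampling probabilities are supplied combinatorially by stretch over a low-stretch spanning tree, and the sparsifier is refined along the chain. Your high-level reduction to Theorem \ref{thm: mul_spasifier} is the right skeleton, but the resistance-estimation step is precisely the bottleneck the cited result exists to remove, so the argument as written only proves a weaker running time.
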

Using Lemma \ref{lem:graph_smoothing}, Theorem \ref{thm: mul_spasifier}
and Theorem \ref{thm: fast_spasifier}, we can derive our main theorem:

\begin{center}
\begin{tabular}{|l|}
\hline 
$\overline{H}=\mathbf{SublinearSparsify}(G,\varepsilon,\delta)$\tabularnewline
\hline 
\hline 
1. Let $E_{\delta n}$ be the graph given by Theorem \ref{thm: expander}.\tabularnewline
\hline 
2. Let $V_{\delta}$ be a random subset of $V$ with size $\left|E_{\delta n}\right|$.\tabularnewline
\hline 
3. View $E_{\delta n}$ as a graph on $V_{\delta}$ and let $\tilde{G}$
be the union of $G$ and $E_{\delta n}$\tabularnewline
\hline 
4. Let $p(u,v)=1/\left(nd_{\tilde{G}}(u)\right)+1/\left(nd_{\tilde{G}}(v)\right).$\tabularnewline
\hline 
5. $H=\mathbf{Sparsify}\left(\tilde{G},p,\Theta(n\log^{2}n/\delta\varepsilon^{2})\right).$\tabularnewline
\hline 
6. $\overline{H}=\mathbf{FastSparsify}\left(H\right)$.\tabularnewline
\hline 
\end{tabular}
\par\end{center}
\begin{thm}
\label{thm:sparsify_algo}Assume $\delta\leq1/\log n$ and $\varepsilon<1$
and the General Graph Model. With high probability, the $\mathbf{SublinearSparsify}(G,\varepsilon,\delta)$
algorithm produces a probabilistic $(O(\varepsilon),O(\delta))$-spectral
sparsifier with $O(n\log n/\varepsilon^{2})$ edges in time $\Ott(n\log^{4}n\log(1/\varepsilon)/\delta\varepsilon^{2}).$%
\footnote{$\Ott(f(n))$ means $O(f(n)\log^{c}\log(n))$ for some constant $c$.%
}\end{thm}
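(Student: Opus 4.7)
My plan is to view the output $\bar{H}$ as an ordinary $O(\varepsilon)$-spectral sparsifier of the auxiliary graph $\tilde{G}=G\cup E_{\delta n}$, and then absorb the extra Laplacian $L_{E_{\delta n}}$ into the additive slack $\delta\|u\|_{2}^{2}$. Steps 1--3 of the algorithm clearly run in $O(n)$ time: Theorem~\ref{thm: expander} builds $E_{\delta n}$ in $O(n)$, sampling $V_{\delta}$ is immediate from the vertex oracle, and $d_{\tilde{G}}(v)$ can be read off on demand from the degree oracle plus the $O(1)$ incident expander edges, so steps 4--5 never need to materialize $\tilde{G}$ explicitly.

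Next I would check that the proposal density $p$ satisfies the hypothesis of Theorem~\ref{thm: mul_spasifier}. A direct double-counting gives $\sum_{e}p(e)=2$, so after rescaling by $1/2$ we have a probability distribution, and Lemma~\ref{lem:graph_smoothing} gives
\[
R_{\tilde{G}}(s,t)\le O\!\left(\tfrac{\log n}{\delta}\right)\!\left(\tfrac{1}{d_{\tilde{G}}(s)}+\tfrac{1}{d_{\tilde{G}}(t)}\right)=O\!\left(\tfrac{n\log n}{\delta}\right)p(s,t),
\]
so the hypothesis holds with $\alpha=O(\log n/\delta)$. Hence Theorem~\ref{thm: mul_spasifier} with $q=\Theta(n\log^{2}n/(\delta\varepsilon^{2}))$ samples produces a multiplicative $\varepsilon$-sparsifier $H$ of $\tilde{G}$ with $O(q)$ edges in $O(q)$ time, once one observes that $p$ factors as (uniform vertex)$\,\times\,$(uniform neighbor), so each sample costs only $O(1)$ oracle queries. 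Feeding $H$ into Theorem~\ref{thm: fast_spasifier} reduces the edge count to $O(n\log n/\varepsilon^{2})$ in time $\Ott(n\log^{4}n\log(1/\varepsilon)/(\delta\varepsilon^{2}))$, which dominates the total running time.

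It then remains to convert the multiplicative guarantee for $\tilde{G}$ into the probabilistic guarantee for $G$. Writing $L_{\tilde{G}}=L_{G}+L_{E_{\delta n}}$, the lower bound (\ref{eq:lower_bound}) is immediate from $L_{E_{\delta n}}\succeq 0$. For the upper bound I fix $u\in\mathbb{R}^{V}$ and note, since $E_{\delta n}$ has maximum degree $O(1)$, that
\[
u^{T}L_{E_{\delta n}}u=\sum_{(x,y)\in E_{\delta n}}(u(x)-u(y))^{2}\le O(1)\sum_{v\in V_{\delta}}u(v)^{2}.
\]
Because $V_{\delta}$ is a uniformly random subset of size $\Theta(\delta n)$, one has $\mathbb{E}\bigl[\sum_{v\in V_{\delta}}u(v)^{2}\bigr]=\Theta(\delta)\|u\|_{2}^{2}$, and Markov's inequality gives $u^{T}L_{E_{\delta n}}u\le O(\delta)\|u\|_{2}^{2}$ with high probability over $V_{\delta}$; combining with the $(1\pm O(\varepsilon))$ sparsifier bound for $\tilde{G}$ yields (\ref{eq:upper_bound}).

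The main conceptual subtlety is that the upper bound cannot be made to hold uniformly over all $u$ simultaneously: taking $u=e_{v}$ for a vertex $v\in V_{\delta}$ that is isolated in $G$ yields $u^{T}L_{E_{\delta n}}u=\Theta(1)$ rather than $O(\delta)$. Thus the Markov step must be interpreted as a per-$u$ high-probability statement, which matches the order of quantifiers in (\ref{eq:upper_bound}). The remaining bookkeeping is that Lemma~\ref{lem:graph_smoothing}, Theorem~\ref{thm: mul_spasifier}, and Theorem~\ref{thm: fast_spasifier} each fail only with constant probability, so adjusting the ``high probability'' constants throughout costs at most universal constants.
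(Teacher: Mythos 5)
Your proof follows essentially the same route as the paper: use Lemma~\ref{lem:graph_smoothing} to get $\alpha=O(\log n/\delta)$, run $\mathbf{Sparsify}$ on $\tilde{G}$, split off $L_{E_{\delta n}}$ and bound $u^{T}L_{E_{\delta n}}u$ by $O(1)\sum_{v\in V_{\delta}}u(v)^{2}$, then apply expectation plus Markov to absorb it into $O(\delta)\|u\|_{2}^{2}$, and finally call $\mathbf{FastSparsify}$. Your extra remarks on the uniform-vertex/uniform-neighbor sampling and on why the upper bound can only hold per-fixed-$u$ are genuinely useful clarifications of the paper's terse writeup. One small arithmetic slip: with $p(u,v)=\tfrac{1}{nd_{\tilde G}(u)}+\tfrac{1}{nd_{\tilde G}(v)}$ one has $\sum_{e}p(e)=\tfrac{1}{n}\sum_{w}d_{\tilde G}(w)\cdot\tfrac{1}{d_{\tilde G}(w)}=1$, not $2$, so no rescaling is needed; this is harmless to the rest of the argument.
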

\begin{proof}
Lemma \ref{lem:graph_smoothing} shows that with high probability,
for all $(u,v)$, we have
\begin{eqnarray*}
p(u,v) & = & \frac{1}{n}\left(\frac{1}{d_{\tilde{G}}(u)}+\frac{1}{d_{\tilde{G}}(v)}\right)=\Omega\left(\frac{\delta}{\log n}\right)\frac{R_{\tilde{G}}(u,v)}{n}.
\end{eqnarray*}
Hence, $p$ satisfy the assumption of Theorem \ref{thm: mul_spasifier}
with $\alpha=O\left(\log n/\delta\right)$. Therefore, $H$ is a $\varepsilon$-spectral
sparsifier of $\tilde{G}$ with high probability. For any $u\in\mathbb{R}^{V}$,
we have
\begin{eqnarray*}
\sum_{(x,y)\in H}\omega(x,y)\left(u(x)-u(y)\right)^{2} & \geq & (1-\varepsilon)\sum_{(x,y)\in\tilde{G}}\left(u(x)-u(y)\right)^{2}\\
 & \geq & (1-\varepsilon)\sum_{(x,y)\in G}\left(u(x)-u(y)\right)^{2}.
\end{eqnarray*}
Hence, $H$ satisfies the condition (\ref{eq:lower_bound}). Also,
for any $u\in\mathbb{R}^{V}$, we have
\begin{eqnarray*}
\sum_{(x,y)\in H}\omega(x,y)\left(u(x)-u(y)\right)^{2} & \leq & (1+\varepsilon)\sum_{(x,y)\in\tilde{G}}\left(u(x)-u(y)\right)^{2}\\
 & \leq & (1+\varepsilon)\sum_{(x,y)\in G}\left(u(x)-u(y)\right)^{2}+4\sum_{x\in V_{\delta}}\left(u(x)\right)^{2}.
\end{eqnarray*}
Since $V_{\delta}$ is a random subset of $V$ with size $\Theta(\delta n)$,
we have
\[
\mathbb{E}\left(\sum_{x\in V_{\delta}}\left(u(x)\right)^{2}\right)=\Theta\left(\delta\right)\sum_{x\in V}\left(u(x)\right)^{2}.
\]
Thus, for any $u\in\mathbb{R}^{V}$, with high probability,
\[
\sum_{(x,y)\in H}\omega(x,y)\left(u(x)-u(y)\right)^{2}\leq(1+\varepsilon)\sum_{(x,y)\in G}\left(u(x)-u(y)\right)^{2}+\Theta\left(\delta\right)\norm u^{2}.
\]
Hence, $H$ satisfies the condition (\ref{eq:upper_bound}). Therefore,
$H$ is a probabilistic $(O(\varepsilon),O(\delta))$-spectral sparsifier
with $O(n\log^{2}n/\delta\varepsilon^{2})$ edges. Using Theorem \ref{thm: fast_spasifier}
and similar proof, we obtain that $\overline{H}$ is a probabilistic
$(O(\varepsilon),O(\delta))$-spectral sparsifier with $O(n\log n/\varepsilon^{2})$
edges.

Since the sampling probability is of the form $1/d(s)+1/d(t)$, we
do it by sampling each node with probability proportionally to the
degree. Thus, it can be implemented in time $O(\log n)$ using the
General Graph Model.
\end{proof}

\section{Applications}

In this section, we demonstrate how to apply the probabilistic spectral
sparsification to solve cut-based problems. Restricting our focus
on $x\in\{0,1\}^{V}$, the upper bound (\ref{eq:upper_bound}) and
the lower bound (\ref{eq:lower_bound}) of the probabilistic spectral
sparsification becomes the following: Suppose $\tilde{G}$ is a probabilistic
$(\varepsilon,\delta)$-spectral sparsifier of $G$, then we have
\begin{enumerate}
\item Lower Bound: We have 
\begin{equation}
(1-\varepsilon)\cut GU\leq\cut{\tilde{G}}U\quad\text{for all }U\subset V.\label{eq:lower_bound-1}
\end{equation}

\item Upper Bound: For all $U\subset V$, we have
\begin{equation}
\cut{\tilde{G}}U\leq(1+\varepsilon)\cut GU+\delta\left|U\right|\quad\text{with high probability}.\label{eq:upper_bound-1}
\end{equation}

\end{enumerate}
The lower bound shows that any cut with a small cut value in $\tilde{G}$
has a small cut value in $G$ and the upper bound shows that such
cut with a small cut value exists in $\tilde{G}$ with high probability.
Therefore, as long as the additive error $\delta\left|U\right|$ is
acceptable, we can approximately solve any cut-based problem on a
probabilistic spectral sparsifier of the original graph and use the
upper bound and lower bound to certify that it is a good solution
for the original graph.

\subsection{(Uniform) Sparsest Cut Problem and Balanced Separator Problem}

The sparsest cut problem is to find a set $U$ with $\left|U\right|<n/2$
such that it minimizes the ratio of $\cut GU$ and $\left|U\right|$.
The balanced separator problem is to solve the same problem with an
extra condition $\left|U\right|=\Omega(n)$. The best known algorithm
\cite{arora2009expander} for both problems achieves an $O(\sqrt{\log n})$
approximation ratio in polynomial time. For fast algorithms, Sherman
\cite{sherman2009breaking} gives an $\Ot\left(m+n^{3/2+t}\right)$
time algorithm with approximation ratio $O\left(\sqrt{\log n/t}\right)$
and M\k{a}dry \cite{madry2010fast} gives an $\tilde{O}\left(m+2^{k}n^{1+1/(3\cdot2^{k}-1)+o(1)}\right)$
time algorithm with approximation ratio $O\left(\log^{\left(1+o(1)\right)\left(k+1/2\right)}n\right)$
for all $k\geq1$. Both algorithms works for weighted graph. Using
these results and our probabilistic spectral sparsifiers, we have
the following:
\begin{cor}
Assume the graph is undirected and unweighted. For any $t\in[O(1/\log n),\Omega(1)]$,
there is an $\tilde{O}\left(n/\OPT+n^{3/2+t}\right)$ time algorithm
to approximate the sparsest cut problem and the balanced separator
problem with approximation ratio $O\left(\sqrt{\log n/t}\right)$.
For any integer $k\geq1$, there is an $\tilde{O}\left(n/\OPT+2^{k}n^{1+1/(3\cdot2^{k}-1)+o(1)}\right)$
time algorithm with approximation ratio $O\left(\log^{\left(1+o(1)\right)\left(k+1/2\right)}n\right)$.\end{cor}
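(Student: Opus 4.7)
The plan is to apply Sherman's and M\k{a}dry's algorithms to a probabilistic spectral sparsifier produced by Theorem~\ref{thm:sparsify_algo} and transfer the resulting cut back to $G$ via the cut-form bounds (\ref{eq:lower_bound-1})--(\ref{eq:upper_bound-1}). Fix $\varepsilon$ to be a small absolute constant. If $\tilde{U}$ is a $\beta$-approximate sparsest cut of the sparsifier $\tilde{G}$, then applying the upper bound to the true optimum $U^{*}$ and the lower bound to $\tilde{U}$ gives
\[
\frac{\cut{G}{\tilde{U}}}{|\tilde{U}|}\;\le\;\frac{1}{1-\varepsilon}\cdot\frac{\cut{\tilde{G}}{\tilde{U}}}{|\tilde{U}|}\;\le\;\frac{\beta}{1-\varepsilon}\left((1+\varepsilon)\,\OPT+\delta\right),
\]
so it suffices to drive $\delta$ down to $O(\OPT/\beta)$ to obtain an $O(\beta)$-approximation of sparsest cut on $G$. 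The same argument covers balanced separator, since (\ref{eq:lower_bound-1})--(\ref{eq:upper_bound-1}) are oblivious to the side constraint $|U|=\Omega(n)$.

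Since $\OPT$ is not known in advance, I will search for the right $\delta$ geometrically. For $i=0,1,2,\ldots$, set $\delta_{i}=2^{-i}$, build $\tilde{G}_{i}=\mathbf{SublinearSparsify}(G,\varepsilon,\delta_{i})$, run Sherman's (resp.\ M\k{a}dry's) algorithm on $\tilde{G}_{i}$ to obtain $\tilde{U}_{i}$ with ratio $c_{i}:=\cut{\tilde{G}_{i}}{\tilde{U}_{i}}/|\tilde{U}_{i}|$, and stop at the first $i$ such that $\delta_{i}\le c_{i}/(3\beta)$. Substituting into the displayed inequality then forces $c_{i}\le \tfrac{3}{2}\beta(1+\varepsilon)\OPT$, and one more application of the lower bound certifies that $\tilde{U}_{i}$ is an $O(\beta)$-approximation for $G$. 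For termination, the lower bound gives $c_{i}\ge(1-\varepsilon)\OPT$, so the stopping rule is triggered once $\delta_{i}\lesssim\OPT/\beta$; since $\OPT\ge 1/n^{2}$ whenever any non-trivial cut exists, this happens after at most $O(\log n)$ iterations.

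For running time, Theorem~\ref{thm:sparsify_algo} says construction at level $i$ costs $\Ot(n/(\varepsilon^{2}\delta_{i}))$, and because $\delta_{i}$ halves each step the sum telescopes into $\Ot(n\beta/\OPT)=\Ot(n/\OPT)$ (the factor $\beta=\log^{O(1)}n$ is absorbed into $\Ot$). Each call to Sherman's algorithm runs on a sparsifier with $\Ot(n)$ edges and takes $\Ot(n+n^{3/2+t})=\Ot(n^{3/2+t})$, so across $O(\log n)$ iterations the total stays $\Ot(n^{3/2+t})$. Adding the two pieces gives the first claim; substituting M\k{a}dry's algorithm into the same skeleton yields the second.

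The main obstacle is handling the randomness of the sparsifier along the doubling loop. The upper bound (\ref{eq:upper_bound-1}) is only a per-cut high-probability guarantee, so I must boost the success probability of each $\mathbf{SublinearSparsify}$ call (e.g.\ to $1-1/\log^{2}n$) and union-bound over the $O(\log n)$ iterations and over the two cuts $U^{*}$ and $\tilde{U}_{i}$ that actually enter the bound in each iteration. A minor point to verify is that the stopping rule $\delta_{i}\le c_{i}/(3\beta)$ uses only quantities the algorithm can compute, so no oracle access to $\OPT$ is required; with these two checks in place the analysis goes through.
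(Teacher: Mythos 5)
Your proposal matches the paper's proof essentially step for step: the same geometric search over $\delta$, the same use of the upper bound (\ref{eq:upper_bound-1}) applied to $U^{*}$ to show $\OPT_{\tilde G}\le(1+\varepsilon)\OPT_G+\delta$, the same use of the lower bound (\ref{eq:lower_bound-1}) to transfer the cut back, the same stopping rule comparing $\delta$ to the observed ratio, and the same telescoping running-time accounting; your observation about boosting the per-iteration success probability and union-bounding over the $O(\log n)$ rounds is a correct and useful remark that the paper leaves implicit. One small correction: Theorem \ref{thm:sparsify_algo} requires $\delta\le 1/\log n$, so you should initialize $\delta_{0}=1/\log n$ (as the paper does) rather than $\delta_{0}=1$; otherwise the first $\Theta(\log\log n)$ iterations operate on sparsifiers whose guarantees are not established and the stopping rule could fire spuriously.
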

\begin{proof}
The proof for both problems and both approximation ratios are similar.
Assume it is the sparsest cut problem and we want to get an $\alpha$
approximation algorithm. The algorithm works as follows:
\begin{enumerate}
\item Take $\delta=1/\log n$.
\item Let $\tilde{G}$ be a probabilistic $(\frac{1}{2},\delta)$-spectral
sparsifier of $G$.
\item Find an $\alpha$ approximate sparsest cut $\overline{U}$ on the
graph $\tilde{G}$.
\item Let $\overline{\OPT}$ be the ratio of $\cut{\tilde{G}}{\overline{U}}$
and $\left|\overline{U}\right|$.
\item If $\delta>\overline{\OPT}/2\alpha$

\begin{enumerate}
\item $\delta\leftarrow\delta/2$, go to step 2
\item Otherwise, output $\overline{U}$.
\end{enumerate}
\end{enumerate}
Let $G$ be the original graph. Let $U_{G}$ and $\OPT_{G}$ are an
optimum set and the optimum value for this problem in graph $G$.
Let $\OPT_{\tilde{G}}$ is the optimum value for graph $\tilde{G}$.
Using (\ref{eq:upper_bound-1}), we have
\begin{eqnarray*}
\OPT_{\tilde{G}} & \leq & \frac{\cut{\tilde{G}}{U_{G}}}{\left|U_{G}\right|}\leq\frac{\frac{3}{2}\cut G{U_{G}}+\delta\left|U_{G}\right|}{\left|U_{G}\right|}=\frac{3}{2}\OPT_{G}+\delta.
\end{eqnarray*}
Since $\overline{U}$ is an $\alpha$ approximate sparsest cut on
$\tilde{G}$, we have
\[
\frac{1}{\alpha}\overline{\OPT}\leq OPT_{\tilde{G}}\leq\frac{3}{2}\OPT_{G}+\delta.
\]
If $\delta<\overline{\OPT}/2\alpha$, then we have $\overline{\OPT}\leq3\alpha\OPT_{G}$.
Hence, (\ref{eq:lower_bound-1}) gives $\cut G{\overline{U}}/\left|\overline{U}\right|\leq6\alpha\OPT_{G}$
and the set $\overline{U}$ solve the problem in $G$ with approximation
ratio $6\alpha$. Otherwise, we have $\delta$ decrease by $2$. Since
$\overline{\OPT}\geq\frac{1}{n}$, the algorithm takes at most $\log n$
iterations. 
\end{proof}
In Theorem \ref{thm:lowerbound-1}, we show that the term $n/\OPT$
in running time is unavoidable. So, our reduction is almost optimal.

\subsection{Minimum s-t Cut Problem}

The Minimum s-t Cut Problem is to find a set $U$ such that $s\in U$,
$t\notin U$ and it minimizes $\cut GU$. 
\begin{cor}
Assume the graph is undirected and unweighted. There are an $\tilde{O}(\sqrt{mn}/\varepsilon^{3})$
time and a $n^{1+o(1)}/\varepsilon^{4}$ time algorithm to find a
minimum s-t cut up to an $\varepsilon n$ error.\end{cor}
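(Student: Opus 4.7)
The plan is to reduce the approximate min s-t cut problem on $G$ to a $(1+O(\varepsilon))$-approximate min s-t cut problem on the probabilistic spectral sparsifier $\tilde{G}$ produced by Theorem~\ref{thm:sparsify_algo}, and then invoke known fast max-flow algorithms on $\tilde{G}$. Choosing $\delta=\Theta(\varepsilon)$ inside $\mathbf{SublinearSparsify}$, the construction cost is $\Ott(n/\varepsilon^{3})$ and $\tilde{G}$ has only $O(n\log n/\varepsilon^{2})$ edges, which is absorbed into both claimed running times.

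The algorithm I would present is: (i) build $\tilde{G}=\mathbf{SublinearSparsify}(G,\varepsilon,\varepsilon)$; (ii) run an approximate min s-t cut subroutine on $\tilde{G}$ to obtain a cut $\tilde{U}$; (iii) output $\tilde{U}$. For the $n^{1+o(1)}/\varepsilon^{4}$ bound, I would run a near-linear-time $(1+\varepsilon)$-approximate max-flow algorithm with cost $m_{\tilde{G}}^{1+o(1)}/\varepsilon^{2}$ on the weighted sparsifier. For the $\Ot(\sqrt{mn}/\varepsilon^{3})$ bound, I would instead run a Goldberg-Rao style exact max-flow on $\tilde{G}$, after scaling and rounding edge weights by $1/\varepsilon$ so that capacities become integers of polynomial size; plugging $m_{\tilde{G}}=O(n/\varepsilon^{2})$ into the $\Ot(\sqrt{m}\cdot m)$-type bound then matches the claim (with $m$ interpreted as the effective edge count in the sparsifier or, equivalently, upper-bounded by $\sqrt{mn}$ of the original graph in the dense regime).

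The correctness follows from the cut-preservation guarantees (\ref{eq:lower_bound-1}) and (\ref{eq:upper_bound-1}). Let $U^{*}$ achieve $\OPT=\cut{G}{U^{*}}$ on $G$. Applying the upper bound to $U^{*}$ gives $\cut{\tilde{G}}{U^{*}}\le(1+\varepsilon)\OPT+\varepsilon|U^{*}|\le(1+\varepsilon)\OPT+\varepsilon n$. Since $\tilde{U}$ is a $(1+\varepsilon)$-approximate min cut of $\tilde{G}$, this yields $\cut{\tilde{G}}{\tilde{U}}\le(1+O(\varepsilon))\OPT+O(\varepsilon)n$. The lower bound then gives $\cut{G}{\tilde{U}}\le\frac{1}{1-\varepsilon}\cut{\tilde{G}}{\tilde{U}}\le(1+O(\varepsilon))\OPT+O(\varepsilon)n$. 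Finally, since $\OPT\le\min(d_{G}(s),d_{G}(t))\le n$, the multiplicative overhead $(1+O(\varepsilon))\OPT$ is itself at most $\OPT+O(\varepsilon)n$, so $\tilde{U}$ certifies $\OPT$ with additive error $O(\varepsilon)n$; rescaling $\varepsilon$ by a constant factor gives the claimed $\varepsilon n$.

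The main obstacle is not the reduction, which is immediate from the bounds in Section~3, but rather selecting the max-flow subroutines so that the runtimes come out cleanly. The weighted sparsifier has edge weights of size $\mathrm{poly}(n,1/\varepsilon)$, and one must verify that the chosen max-flow algorithms survive this, either by absorbing $\log U$ into the $\Ot$ notation or by a scale-and-round preprocessing whose rounding error contributes at most $O(\varepsilon n)$, which is well within our additive budget. Once that is handled, both running times follow by adding the $\Ott(n/\varepsilon^{3})$ sparsification cost to the chosen max-flow cost.
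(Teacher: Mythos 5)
Your high-level reduction is sound and matches the paper's: build a probabilistic $(\varepsilon,\varepsilon)$-spectral sparsifier, run a min s-t cut subroutine on it, and use the lower/upper cut bounds together with $\OPT\le n$ to convert the multiplicative error into an additive $O(\varepsilon n)$. The $n^{1+o(1)}/\varepsilon^{4}$ bound via near-linear $(1+\varepsilon)$-approximate max flow (Kelner et al.\ / Sherman) on the sparsifier is exactly what the paper does.

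For the $\tilde{O}(\sqrt{mn}/\varepsilon^{3})$ bound, however, your Goldberg-Rao-on-the-sparsifier route does not give the stated bound. Goldberg-Rao's running time depends on the number of edges of the graph you feed it, not on the total edge weight. With $m_{\tilde G}=\Ot(n/\varepsilon^{2})$ edges you get $\Ot(m_{\tilde G}^{3/2})=\Ot(n^{3/2}/\varepsilon^{3})$, which matches $\sqrt{mn}/\varepsilon^{3}$ only when $m=\Theta(n^{2})$; for any sparser original graph your bound is strictly worse, and the statement is vacuous in the regime where it is supposed to be interesting. The source of the $\sqrt{m}$ factor in the paper is different: the paper invokes a max-flow/min-cut algorithm (the one from the proof of Theorem 4 of Lee's cited work) whose running time is $\tilde{O}\bigl(\tfrac{m'}{\varepsilon}\sqrt{W/n}\bigr)$, where $m'$ is the number of edges of the input graph and $W$ is its \emph{total edge weight}. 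Crucially, the sparsifier has few edges ($\Ot(n/\varepsilon^{2})$) but its total weight is still $\Ot(m)$, since it approximates the total energy of the original $m$-edge graph. Substituting $m'=\Ot(n/\varepsilon^{2})$ and $W=\Ot(m)$ yields $\Ot\bigl(\tfrac{n/\varepsilon^{2}}{\varepsilon}\sqrt{m/n}\bigr)=\Ot(\sqrt{mn}/\varepsilon^{3})$. Goldberg-Rao, being blind to the total weight, cannot recover this. You would need to replace your choice of subroutine with a weight-sensitive algorithm of this type (or redo Lee's argument directly on the sparsifier) to close the gap; the scale-and-round technicalities you flag are real but secondary to this.
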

\begin{proof}
On an undirected graph with integer weight, the proof of Theorem 4
of \cite{lee2013new} shows an $\tilde{O}\left(\frac{m}{\varepsilon}\sqrt{\frac{W}{n}}\right)$
time algorithm to compute an approximate minimum s-t cut with $\varepsilon n$
additive error, where $W$ is the total weight. Note that the total
weight of the result of our sparsification is $\tilde{O}\left(m\right)$
and changes can be made so that the weights are integers. This gives
the first result.

For the second result, it follows from \cite{kelner2013almost,sherman2013nearly}.
\end{proof}

\subsection{Other applications}

For some cut-based problems such as the maximum cut problem and the
minimum cut problem, sampling edges with constant probability gives
good enough guarantee. For other cut-based problems such as the multicut
problem, one can use our sparsification to reduce the problem into
sparse graphs, then use the technique by M\k{a}dry \cite{madry2010fast}
to further reduce the problem into almost trees, which can be then
solved by elementary methods in many cases.

Our probabilistic spectral sparsifier is also useful for applications
involves the graph energy $\sum_{x\sim y}(u(x)-u(y))^{2}$. It includes
a lot of problems in many fields, such as approximating Fiedler vector
\cite{levy2006laplace}, minimizing all sorts of variational problems
in image processing \cite{chan2005image}.

\section{Lower Bound}

In this section, we show that the additive error in upper bound (\ref{eq:upper_bound})
for the sparsifier is necessary. In the proof, we construct a family
of random graphs and shows that it is difficult to estimate the cut
value of some sets in the graphs. In the Lemma \ref{lem:Gkp}, we
construct a family of random graphs which is served as a building
block of the graphs for Theorem \ref{thm:lowerbound}. 
\begin{lem}
\label{lem:Gkp}Assume the general graph model. For any integer $k>3$
and $0<p\leq1/4$ such that $pk^{2}\geq100$, there is a family of
random graph $G_{k,p}=(V,E)$ with $4k$ vertices and $2k^{2}$ edges
and a cut $S\subset E$ which satisfies the following property: let
$C$ be the estimate of $\cut{}S$ of any deterministic algorithm
which calls the oracle less than $k^{2}/2$ times. Then, we have 
\[
\mathbb{P}\left(\left|C-\cut{}S\right|\geq k\sqrt{\frac{p}{8}}\right)\geq0.01.
\]
\end{lem}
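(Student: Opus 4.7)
The plan is to construct $G_{k,p}$ as a hidden random perturbation of a fixed $k$-regular base graph, arranged so that the vertex and degree oracles reveal nothing and each edge-oracle call exposes at most one bit of random information. Let $V=A\cup B$ with $A=A_{1}\cup A_{2}$ and $B=B_{1}\cup B_{2}$, each of the four parts of size $k$, and set $S=A$. Take $G_{0}$ to be the disjoint union of the complete bipartite graph $K_{k,k}$ on $A_{1}\cup B_{1}$ and the complete bipartite graph $K_{k,k}$ on $A_{2}\cup B_{2}$; this is $k$-regular on $4k$ vertices, has $2k^{2}$ edges, and $\cut{G_{0}}{S}=2k^{2}$. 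For each $(i,j)\in[k]\times[k]$ draw an independent $X_{ij}\sim\mathrm{Bernoulli}(p)$ and define a \emph{swap gadget}: if $X_{ij}=0$ the gadget keeps the two cut edges $(a_{i}^{(1)},b_{j}^{(1)})$ and $(a_{j}^{(2)},b_{i}^{(2)})$ of $G_{0}$, while if $X_{ij}=1$ it replaces them by the two non-cut edges $(a_{i}^{(1)},a_{j}^{(2)})$ and $(b_{j}^{(1)},b_{i}^{(2)})$. A direct check shows that the gadgets partition the edge set of $G_{0}$, that each inserted edge is indexed uniquely by $(i,j)$, and that every vertex remains of degree exactly $k$ under any joint choice of swaps; hence $G_{k,p}$ is a simple $k$-regular graph on $4k$ vertices with $2k^{2}$ edges and $\cut{G_{k,p}}{S}=2k^{2}-2X$ where $X\defeq\sum_{i,j}X_{ij}\sim\mathrm{Bin}(k^{2},p)$.

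For the oracle lower bound I arrange the neighbor lists returned by $\mathcal{O}_{3}$ so that the $j$-th neighbor of $a_{i}^{(1)}$ is $b_{j}^{(1)}$ when $X_{ij}=0$ and $a_{j}^{(2)}$ when $X_{ij}=1$, and symmetrically on the other three sides. Under this convention $\mathcal{O}_{1}$ is a fixed labeling and $\mathcal{O}_{2}$ always returns $k$, so neither ever depends on the $X_{ij}$'s; and each edge query reveals the value of exactly one $X_{ij}$ (determined by the queried vertex and position), regardless of how the algorithm adapts. Hence after $T<k^{2}/2$ oracle calls the set $U\subset[k]\times[k]$ of indices never revealed satisfies $|U|>k^{2}/2$, and conditional on the algorithm's transcript the variables $\{X_{ij}\}_{(i,j)\in U}$ remain i.i.d.\ $\mathrm{Bernoulli}(p)$.

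Let $C$ be the algorithm's output, a deterministic function of the transcript, and set $\xi\defeq C-\cut{G_{k,p}}{S}$. Then $\xi=\beta+Z$ where $\beta$ is transcript-measurable and $Z=-2\sum_{(i,j)\in U}(X_{ij}-p)$ has conditional mean $0$ and conditional variance $\sigma^{2}=4|U|p(1-p)$. Using $p\le 1/4$ and $pk^{2}\ge 100$ gives $\sigma^{2}\ge \tfrac{3}{2}k^{2}p$ and $|U|p(1-p)\ge 37$, so the standard fourth-moment bound $\E[Z^{4}\mid\text{transcript}]\le 3.1\,\sigma^{4}$ holds and $(\beta+Z)^{4}\le 8(\beta^{4}+Z^{4})$ gives $\E[\xi^{4}\mid\text{transcript}]\le 8\beta^{4}+25\sigma^{4}\le 25\,(\E[\xi^{2}\mid\text{transcript}])^{2}$. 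Applying Paley--Zygmund to the nonnegative random variable $\xi^{2}$ at threshold $\theta=1/12$, and using $\theta\,\E[\xi^{2}\mid\text{transcript}]\ge \tfrac{1}{12}\sigma^{2}\ge \tfrac{k^{2}p}{8}$, yields
\[
\mathbb{P}\!\left[\xi^{2}\ge \tfrac{k^{2}p}{8}\;\Big|\;\text{transcript}\right]\ge \frac{(11/12)^{2}}{25}>0.03,
\]
and averaging over transcripts gives the stated unconditional probability $\ge 0.01$.

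The main technical obstacle is the combinatorial design of the swap gadgets so that arbitrary simultaneous swaps still yield a simple $k$-regular graph without multi-edges; this is exactly what the Latin-square-style indexing $(i,j)\mapsto(a_{i}^{(1)},a_{j}^{(2)},b_{j}^{(1)},b_{i}^{(2)})$ is designed to achieve, since both inserted edges $(a_{i}^{(1)},a_{j}^{(2)})$ and $(b_{j}^{(1)},b_{i}^{(2)})$ are recovered uniquely from the pair $(i,j)$, so distinct gadgets never collide. Everything else --- the regularity verification, the reduction from queries to unrevealed indices, and the Paley--Zygmund step --- is routine once this design is in place.
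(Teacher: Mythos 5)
Your proposal is correct and, after unwinding notation, the graph construction is essentially isomorphic to the paper's: set $A_1,A_2,B_1,B_2 \leftrightarrow V^1,V^3,V^2,V^4$ and $X_{ij}=1-H_{ij}$, and the two families coincide. Both are $k$-regular $4$-partite graphs on $4k$ vertices whose cut across the designated bipartition is an affine function of a sum of $k^2$ i.i.d.\ Bernoulli bits, one bit per gadget, with each edge-oracle call revealing exactly one bit. So the construction and the query-reduction are the same as in the paper (you are somewhat more explicit about how $\mathcal{O}_3$'s neighbor lists must be arranged so that degree/ordering leak nothing, which the paper leaves implicit). Minor nit: with your sign conventions $\xi=\beta-Z$ rather than $\beta+Z$, but since only $|Z|$ and its conditional moments enter, this is inconsequential.

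Where you genuinely diverge from the paper is in the anti-concentration step. The paper conditions on the revealed bits, observes that the residual cut is $2\cdot\mathrm{Bin}(p,\,k^2-|H|)$ plus a known constant, and invokes a standalone Chernoff-truncation lemma (Lemma~\ref{lem:binomial} in the appendix) stating that a $\mathrm{Bin}(p,n)$ with $pn\ge 36$ cannot be within $\tfrac12\sqrt{pn}$ of any fixed point with probability more than $0.99$. You instead bound the conditional fourth moment of the centered residual $Z$ by $3.1\sigma^4$ (using $|Y_i|\le 2$ and $\sigma^2\ge 150$), absorb the transcript-measurable bias $\beta$ via $(\beta+Z)^4\le 8(\beta^4+Z^4)$, and apply Paley--Zygmund to $\xi^2$ at level $\theta=1/12$. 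Both routes yield the same quantitative conclusion. Your version is more self-contained (no auxiliary lemma) at the cost of a slightly heavier in-line calculation; the paper's factorization into a clean binomial lemma keeps the main proof shorter and that lemma is potentially reusable. Either choice is sound.
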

\begin{proof}
For each pair $i,j\in[k]$, let $H_{ij}$ be an independent variable
such that $H_{ij}=1$ with probability $p$ and $H_{ij}=0$ otherwise.
We construct a family of random graph $G_{k,p}$ using the random
variable $\{H_{ij}\}_{i\in[k],j\in[k]}.$ The graph $G_{k,p}$ consists
of 4 sets of vertices $V^{1},V^{2},V^{3},V^{4}$ and each of them
has $k$ vertices. We call each vertex in $V^{t}$ by $it$ for some
$i\in[k]$. If $H_{ij}=1$, we place the edges $\{(i1,j2),(j3,i4)\}$,
which is indicated by the solid lines in the figure. Otherwise, we
place the edges $\{(i1,j3),(j2,i4)\}$. Note that this graph is $k$
regular and hence the degree oracle does not provide any information.

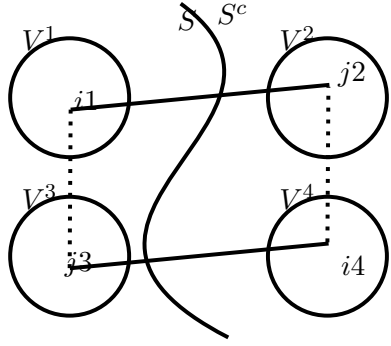
\begin{figure}
\ifx\du\undefined
  \newlength{\du}
\fi
\setlength{\du}{15\unitlength}
\begin{tikzpicture}[scale=0.5]
\pgftransformxscale{1.000000}
\pgftransformyscale{-1.000000}
\definecolor{dialinecolor}{rgb}{0.000000, 0.000000, 0.000000}
\pgfsetstrokecolor{dialinecolor}
\definecolor{dialinecolor}{rgb}{1.000000, 1.000000, 1.000000}
\pgfsetfillcolor{dialinecolor}
\definecolor{dialinecolor}{rgb}{1.000000, 1.000000, 1.000000}
\pgfsetfillcolor{dialinecolor}
\pgfpathellipse{\pgfpoint{11.000000\du}{9.000000\du}}{\pgfpoint{3.000000\du}{0\du}}{\pgfpoint{0\du}{3.000000\du}}
\pgfusepath{fill}
\pgfsetlinewidth{0.100000\du}
\pgfsetdash{}{0pt}
\pgfsetdash{}{0pt}
\pgfsetmiterjoin
\definecolor{dialinecolor}{rgb}{0.000000, 0.000000, 0.000000}
\pgfsetstrokecolor{dialinecolor}
\pgfpathellipse{\pgfpoint{11.000000\du}{9.000000\du}}{\pgfpoint{3.000000\du}{0\du}}{\pgfpoint{0\du}{3.000000\du}}
\pgfusepath{stroke}
\definecolor{dialinecolor}{rgb}{0.000000, 0.000000, 0.000000}
\pgfsetstrokecolor{dialinecolor}
\node at (11.000000\du,9.240000\du){};
\definecolor{dialinecolor}{rgb}{0.000000, 0.000000, 0.000000}
\pgfsetstrokecolor{dialinecolor}
\node[anchor=west] at (8.000000\du,6.000000\du){$V^1$};
\definecolor{dialinecolor}{rgb}{1.000000, 1.000000, 1.000000}
\pgfsetfillcolor{dialinecolor}
\pgfpathellipse{\pgfpoint{24.000000\du}{9.000000\du}}{\pgfpoint{3.000000\du}{0\du}}{\pgfpoint{0\du}{3.000000\du}}
\pgfusepath{fill}
\pgfsetlinewidth{0.100000\du}
\pgfsetdash{}{0pt}
\pgfsetdash{}{0pt}
\pgfsetmiterjoin
\definecolor{dialinecolor}{rgb}{0.000000, 0.000000, 0.000000}
\pgfsetstrokecolor{dialinecolor}
\pgfpathellipse{\pgfpoint{24.000000\du}{9.000000\du}}{\pgfpoint{3.000000\du}{0\du}}{\pgfpoint{0\du}{3.000000\du}}
\pgfusepath{stroke}
\definecolor{dialinecolor}{rgb}{0.000000, 0.000000, 0.000000}
\pgfsetstrokecolor{dialinecolor}
\node at (24.000000\du,9.240000\du){};
\definecolor{dialinecolor}{rgb}{0.000000, 0.000000, 0.000000}
\pgfsetstrokecolor{dialinecolor}
\node[anchor=west] at (21.000000\du,6.000000\du){$V^2$};
\definecolor{dialinecolor}{rgb}{1.000000, 1.000000, 1.000000}
\pgfsetfillcolor{dialinecolor}
\pgfpathellipse{\pgfpoint{11.000000\du}{17.000000\du}}{\pgfpoint{3.000000\du}{0\du}}{\pgfpoint{0\du}{3.000000\du}}
\pgfusepath{fill}
\pgfsetlinewidth{0.100000\du}
\pgfsetdash{}{0pt}
\pgfsetdash{}{0pt}
\pgfsetmiterjoin
\definecolor{dialinecolor}{rgb}{0.000000, 0.000000, 0.000000}
\pgfsetstrokecolor{dialinecolor}
\pgfpathellipse{\pgfpoint{11.000000\du}{17.000000\du}}{\pgfpoint{3.000000\du}{0\du}}{\pgfpoint{0\du}{3.000000\du}}
\pgfusepath{stroke}
\definecolor{dialinecolor}{rgb}{0.000000, 0.000000, 0.000000}
\pgfsetstrokecolor{dialinecolor}
\node at (11.000000\du,17.240000\du){};
\definecolor{dialinecolor}{rgb}{0.000000, 0.000000, 0.000000}
\pgfsetstrokecolor{dialinecolor}
\node[anchor=west] at (8.000000\du,14.000000\du){$V^3$};
\definecolor{dialinecolor}{rgb}{1.000000, 1.000000, 1.000000}
\pgfsetfillcolor{dialinecolor}
\pgfpathellipse{\pgfpoint{24.000000\du}{17.000000\du}}{\pgfpoint{3.000000\du}{0\du}}{\pgfpoint{0\du}{3.000000\du}}
\pgfusepath{fill}
\pgfsetlinewidth{0.100000\du}
\pgfsetdash{}{0pt}
\pgfsetdash{}{0pt}
\pgfsetmiterjoin
\definecolor{dialinecolor}{rgb}{0.000000, 0.000000, 0.000000}
\pgfsetstrokecolor{dialinecolor}
\pgfpathellipse{\pgfpoint{24.000000\du}{17.000000\du}}{\pgfpoint{3.000000\du}{0\du}}{\pgfpoint{0\du}{3.000000\du}}
\pgfusepath{stroke}
\definecolor{dialinecolor}{rgb}{0.000000, 0.000000, 0.000000}
\pgfsetstrokecolor{dialinecolor}
\node at (24.000000\du,17.240000\du){};
\definecolor{dialinecolor}{rgb}{0.000000, 0.000000, 0.000000}
\pgfsetstrokecolor{dialinecolor}
\node[anchor=west] at (21.000000\du,14.000000\du){$V^4$};
\pgfsetlinewidth{0.100000\du}
\pgfsetdash{}{0pt}
\pgfsetdash{}{0pt}
\pgfsetbuttcap
{
\definecolor{dialinecolor}{rgb}{0.000000, 0.000000, 0.000000}
\pgfsetfillcolor{dialinecolor}
\definecolor{dialinecolor}{rgb}{0.000000, 0.000000, 0.000000}
\pgfsetstrokecolor{dialinecolor}
\draw (11.050000\du,9.600000\du)--(24.100000\du,8.350000\du);
}
\pgfsetlinewidth{0.100000\du}
\pgfsetdash{}{0pt}
\pgfsetdash{}{0pt}
\pgfsetbuttcap
{
\definecolor{dialinecolor}{rgb}{0.000000, 0.000000, 0.000000}
\pgfsetfillcolor{dialinecolor}
\definecolor{dialinecolor}{rgb}{0.000000, 0.000000, 0.000000}
\pgfsetstrokecolor{dialinecolor}
\draw (11.009540\du,17.609540\du)--(24.059540\du,16.359540\du);
}
\pgfsetlinewidth{0.100000\du}
\pgfsetdash{{\pgflinewidth}{0.200000\du}}{0cm}
\pgfsetdash{{\pgflinewidth}{0.200000\du}}{0cm}
\pgfsetbuttcap
{
\definecolor{dialinecolor}{rgb}{0.000000, 0.000000, 0.000000}
\pgfsetfillcolor{dialinecolor}
\definecolor{dialinecolor}{rgb}{0.000000, 0.000000, 0.000000}
\pgfsetstrokecolor{dialinecolor}
\draw (11.000000\du,17.600000\du)--(11.050000\du,9.600000\du);
}
\pgfsetlinewidth{0.100000\du}
\pgfsetdash{{\pgflinewidth}{0.200000\du}}{0cm}
\pgfsetdash{{\pgflinewidth}{0.200000\du}}{0cm}
\pgfsetbuttcap
{
\definecolor{dialinecolor}{rgb}{0.000000, 0.000000, 0.000000}
\pgfsetfillcolor{dialinecolor}
\definecolor{dialinecolor}{rgb}{0.000000, 0.000000, 0.000000}
\pgfsetstrokecolor{dialinecolor}
\draw (24.000000\du,16.300000\du)--(24.050000\du,8.450000\du);
}
\definecolor{dialinecolor}{rgb}{0.000000, 0.000000, 0.000000}
\pgfsetstrokecolor{dialinecolor}
\node[anchor=west] at (10.650000\du,9.150000\du){$i1$};
\definecolor{dialinecolor}{rgb}{0.000000, 0.000000, 0.000000}
\pgfsetstrokecolor{dialinecolor}
\node[anchor=west] at (24.105000\du,17.452500\du){$i4$};
\definecolor{dialinecolor}{rgb}{0.000000, 0.000000, 0.000000}
\pgfsetstrokecolor{dialinecolor}
\node[anchor=west] at (23.960000\du,7.807500\du){$j2$};
\definecolor{dialinecolor}{rgb}{0.000000, 0.000000, 0.000000}
\pgfsetstrokecolor{dialinecolor}
\node[anchor=west] at (10.165000\du,17.412500\du){$j3$};
\definecolor{dialinecolor}{rgb}{0.000000, 0.000000, 0.000000}
\pgfsetstrokecolor{dialinecolor}
\definecolor{dialinecolor}{rgb}{0.000000, 0.000000, 0.000000}
\pgfsetstrokecolor{dialinecolor}
\definecolor{dialinecolor}{rgb}{0.000000, 0.000000, 0.000000}
\pgfsetstrokecolor{dialinecolor}
\definecolor{dialinecolor}{rgb}{0.000000, 0.000000, 0.000000}
\pgfsetstrokecolor{dialinecolor}
\pgfsetlinewidth{0.100000\du}
\pgfsetdash{}{0pt}
\pgfsetdash{}{0pt}
\pgfsetmiterjoin
\pgfsetbuttcap
{
\definecolor{dialinecolor}{rgb}{0.000000, 0.000000, 0.000000}
\pgfsetfillcolor{dialinecolor}
\definecolor{dialinecolor}{rgb}{0.000000, 0.000000, 0.000000}
\pgfsetstrokecolor{dialinecolor}
\pgfpathmoveto{\pgfpoint{16.600000\du}{4.250000\du}}
\pgfpathcurveto{\pgfpoint{24.600000\du}{10.100000\du}}{\pgfpoint{7.250000\du}{15.050000\du}}{\pgfpoint{19.000000\du}{21.100000\du}}
\pgfusepath{stroke}
}
\definecolor{dialinecolor}{rgb}{0.000000, 0.000000, 0.000000}
\pgfsetstrokecolor{dialinecolor}
\node[anchor=west] at (15.850000\du,5.100000\du){$S$};
\definecolor{dialinecolor}{rgb}{0.000000, 0.000000, 0.000000}
\pgfsetstrokecolor{dialinecolor}
\node[anchor=west] at (17.855000\du,4.802500\du){$S^c$};
\end{tikzpicture}\caption{Illustration of $G_{k,p}$}
\end{figure}

Let $S=V^{1}\cup V^{3}$. Then, we have $\mathbb{E}\left(\cut{}S\right)=2\mathbb{E}(\sum_{i,j}H_{ij})=2pk^{2}$
and $\text{Var}\left(\cut{}S\right)=4\text{Var}(\sum_{i,j}H_{ij})=4p(1-p)k^{2}.$
Consider any deterministic algorithm that calls the oracle less than
$k^{2}/2$ times. Let $C$ be the estimate of $\cut{}S$ given by
the algorithm. Since each edge is only affected by one random variable
$H_{ij}$, only at most $k^{2}/2$ values of $H_{ij}$ are revealed.
Let $H$ be the set of known random variables $H_{ij}$. Then, we
have $\left|H\right|\leq k^{2}/2$. Therefore, the cut value $\cut{}S$
given $H$ follows the binomial distribution $2B(p,k^{2}-\left|H\right|)$
plus the constant $2\sum_{ij\in H}H_{ij}$. 

Since $p(k^{2}-\left|H\right|)\geq pk^{2}/2\geq50$, the result follows
from Lemma \ref{lem:binomial}.
\end{proof}
The following theorem shows that even the graph is quite sparse, it
is not possible to improve our probabilistic spectral sparsification
algorithm by too much. Instead of proving lower bound for the spectral
sparsification, we show the lower bound for the cut sparsification
which satisfies (\ref{eq:lower_bound}) and (\ref{eq:upper_bound})
for $u\in\{0,1\}^{V}$ only. 
\begin{thm}
\label{thm:lowerbound}For any $\varepsilon>0$ and $\delta>0$, it
takes $\Omega\left(\frac{n}{\varepsilon^{2}}+\frac{n}{\delta}\right)$
queries in the general graph model to construct a probabilistic $(\varepsilon,\delta)$
cut sparsifier for graphs with $n$ vertices and $\Omega\left(\frac{n}{\varepsilon^{2}}+\frac{n}{\delta}\right)$
edges.\end{thm}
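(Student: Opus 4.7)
The plan is to amplify Lemma \ref{lem:Gkp} across many disjoint instances and then apply an anti-concentration argument to a single carefully chosen cut. Let $k$ and $p\in(0,1/4]$ be parameters to be fixed, and take $G$ to be the vertex-disjoint union of $N=n/(4k)$ independent copies $G^{(1)}_{k,p},\dots,G^{(N)}_{k,p}$ of the random graph from Lemma \ref{lem:Gkp}. Then $G$ has $n$ vertices and $m=2Nk^2=nk/2$ edges; setting $k=\Theta(1/\varepsilon^2+1/\delta)$ makes the hypothesis $m=\Omega(n/\varepsilon^2+n/\delta)$ automatic. The hard cut is $U=\bigcup_l S_l$, the union of the distinguished cuts $S_l=V^1_l\cup V^3_l$ inside each copy. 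Because the copies are vertex-disjoint, $\cut{G}{U}=2\sum_{l,i,j}H^{(l)}_{ij}$ is a linear function of the $Nk^2$ independent $\mathrm{Bernoulli}(p)$ variables defining $G$, with mean $2Npk^2$ and variance $4Nk^2p(1-p)$.

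Next I would show anti-concentration for the residual cut value. As already observed in the proof of Lemma \ref{lem:Gkp}, each oracle call reveals the value of at most one Bernoulli $H^{(l)}_{ij}$, so after $Q<Nk^2/2$ queries at least $Nk^2/2$ of the Bernoullis remain i.i.d.\ $\mathrm{Bernoulli}(p)$ conditional on the full algorithmic transcript. Hence, conditional on that transcript, the sparsifier's estimate $C:=\cut{\tilde G}{U}$ is a fixed number while $\cut{G}{U}$ is a shifted, rescaled binomial whose standard deviation is at least $\sigma:=k\sqrt{Np(1-p)}$. The binomial anti-concentration bound of Lemma \ref{lem:binomial}, which is exactly the engine driving Lemma \ref{lem:Gkp}, then yields $\mathbb{P}(|C-\cut{G}{U}|\geq \Omega(\sigma))\geq \Omega(1)$, uniformly over the deterministic value $C$.

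Finally, I would tune the parameters. For the probabilistic $(\varepsilon,\delta)$-cut-sparsifier guarantee $|C-\cut{G}{U}|\leq \varepsilon\cut{G}{U}+\delta|U|$ to accommodate an $\Omega(\sigma)$ error, the budget $\varepsilon\cdot 2Npk^2+\delta\cdot 2Nk$ must dominate $k\sqrt{Np}$, i.e.\ $\varepsilon\sqrt{Npk^2}+\delta\sqrt{N/p}=\Omega(1)$. To obtain the $\Omega(n/\varepsilon^2)$ term, set $k=\Theta(1/\varepsilon^2)$ and choose $p$ so that $Npk^2=\Theta(1/\varepsilon^2)$, which makes the first summand a small constant while the second one is also a small constant in the natural regime $n=O(1/(\varepsilon\delta))$. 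To obtain the $\Omega(n/\delta)$ term, set $k=\Theta(1/\delta)$ and $p=\Theta(1)$. In either case the sparsifier must fail with constant probability, forcing $Q\geq Nk^2/2=nk/8=\Omega(n/\varepsilon^2+n/\delta)$. The main obstacle is that the algorithm is adaptive, so the set of ``observed'' Bernoullis is a random function of $G$; the argument stays clean only because the oracle is \emph{non-revealing} about unobserved Bernoullis, so that conditional on the transcript the residual cut value is still a plain binomial whose anti-concentration can be invoked uniformly in the algorithm's output.
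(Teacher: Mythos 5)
Your setup and the key observation --- that for a deterministic algorithm, conditioning on the transcript leaves the unrevealed Bernoullis i.i.d., so $\cut GU$ remains a plain binomial that Lemma~\ref{lem:binomial} controls uniformly in the fixed output $C$ --- are sound, and this part mirrors the paper's use of Lemma~\ref{lem:Gkp}. The gap is the choice of a single \emph{fixed} cut $U=\bigcup_l S_l$, which allows the per-copy deviations to cancel. The conditional standard deviation of $\cut GU$ is only $\sigma=\Theta(k\sqrt{Np})$, i.e.\ it grows like $\sqrt N$ times the per-copy scale $k\sqrt p$, while the additive budget $\delta|U|=2\delta Nk$ grows linearly in $N$. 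As soon as $N$ is large the budget swallows $\sigma$, and your own tuning confirms this: for $k=\Theta(1/\varepsilon^2)$, $p=\Theta(1/n)$ the second term $\delta\sqrt{N/p}=\Theta(\delta\varepsilon n)$ forces $n=O(1/(\varepsilon\delta))$, and for $k=\Theta(1/\delta)$, $p=\Theta(1)$ both $\varepsilon\sqrt{Npk^2}=\varepsilon\sqrt{n/\delta}$ and $\delta\sqrt{N/p}=\delta^{3/2}\sqrt n$ diverge as $n$ grows. The theorem imposes no such cap on $n$, so the argument as written does not establish the claimed $\Omega(n/\varepsilon^2+n/\delta)$ bound in general.

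The paper avoids this by aggregating over a data-dependent cut rather than a fixed one. It applies Lemma~\ref{lem:Gkp} separately to each copy $G_i$ that received fewer than half of its $k^2$ possible queries, argues that a constant fraction of these copies have their $\cut{}{S_i}$-estimate off by more than $1$ in the \emph{same} direction, and sets $A$ to be the union of exactly those $S_i$. Because the per-copy deviations on $A$ have matching signs by construction, the error on $A$ accumulates linearly as $\Omega(\delta n)$ rather than in quadrature, so it keeps pace with the linearly growing budget $\varepsilon\cut GA+\delta|A|$. (The paper also fixes $p=\delta^2$ or $p=\varepsilon^2$ rather than $p=\Theta(1/n)$, and splits the two cases on whether $\delta<\varepsilon^2$, choosing $k=10/\delta$ or $k=10/\varepsilon^2$ accordingly.) To repair your proof you would need to replace the fixed union by the same sign-aligned, algorithm-dependent union, or otherwise defeat the $\sqrt N$-averaging; the fixed binomial cut by itself cannot yield the stated lower bound outside a small-$n$ regime.
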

\begin{proof}
We divide the proof into two cases, $\delta<\varepsilon^{2}$ and
$\delta\geq\varepsilon^{2}$. In both cases, we construct a family
of random graphs and shows that any deterministic algorithm takes
$\Omega\left(\frac{n}{\varepsilon^{2}}+\frac{n}{\delta}\right)$ queries
to estimate the cut value of a certain cut within the precision required.

For the first case $\delta<\varepsilon^{2}$, let $G$ be the disjoint
union of $\delta n$ independent copies of $G_{10\delta^{-1},\delta^{2}}$
defined in Lemma \ref{lem:Gkp}. Let $G_{i}$ be each copy and $S_{i}$
be each corresponding cut defined in Lemma \ref{lem:Gkp}. Note that
$G$ has $\Theta(n)$ vertices and $\Theta(\frac{n}{\delta})$ edges. 

Let us consider any deterministic algorithm which calls the oracle
less than $\frac{n}{4\delta}$ times. At least $\frac{\delta n}{2}$
copies of $G_{i}$, the algorithm calls the oracle less than $\frac{\delta^{-2}}{2}$
times for these $G_{i}$. Hence, Lemma \ref{lem:Gkp} shows with probability
$0.01$, the estimate value deviates from the cut value for more than
$1$. For those $S_{i}$, the estimate value is either larger than
the cut value by $1$ or is smaller than the cut value by $1$. Without
loss of generality, we assume the first case happens more. And let
$\mathcal{S}$ be the set of those $S_{i}$ in the first case. Then,
we have $|\mathcal{S}|=\Omega(\delta n)$ with high probability. Let
$A=\bigcup_{S\in\mathcal{S}}S$. Then, the estimate of $\cut{}A$
is larger than the true value by $\Omega(\delta n)$. Also, note that
$\cut{}A=O(\delta n)$.

It shows that any deterministic algorithm takes at least $\Omega(\frac{n}{\delta})$
queries to construct a probabilistic $(O(1),\delta)$ cut sparsifier
for graphs with $n$ vertices and $\Omega\left(\frac{n}{\delta}\right)$
edges.

For the second case $\delta\geq\varepsilon^{2}$, let $G$ be the
disjoint union of $\varepsilon^{2}n$ independent copies of $G_{10\varepsilon^{-2},\varepsilon^{2}}$.
By similar argument, we can show that any deterministic algorithm
takes at least $\Omega(\frac{n}{\varepsilon^{2}})$ queries to construct
a $(\varepsilon,O(1))$ cut sparsifier for graphs with $\Theta(n)$
vertices and $\Omega\left(\frac{n}{\varepsilon^{2}}\right)$ edges.

Combining both cases, the result follows from the Yao's principle.
\end{proof}
Similar lower bounds can be established for various problems. We use
the sparest cut problem as an example to show that our approach can
be used to give almost optimal results. 
\begin{thm}
\label{thm:lowerbound-1}For any $O(1)>\varepsilon>\frac{1}{n}$,
it takes $\Omega\left(\frac{n}{\varepsilon}\right)$ queries in the
general graph model to distinguish between a disconnected graph and
a graph with $\min_{|U|<\frac{n}{2}}\cut{}U/|U|=\Theta\left(\varepsilon\right).$\end{thm}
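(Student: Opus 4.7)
I plan to apply Yao's principle: construct two distributions $\mathcal{D}_1$ (on disconnected graphs) and $\mathcal{D}_2$ (on graphs with sparsest cut $\Theta(\varepsilon)$) such that no deterministic algorithm making fewer than $cn/\varepsilon$ queries (for an absolute constant $c>0$) can tell them apart with probability more than $1/2+o(1)$. The two distributions will be coupled so that a graph drawn from $\mathcal{D}_1$ agrees with the coupled graph drawn from $\mathcal{D}_2$ on all but $O(\varepsilon n)$ edges out of $\Theta(n^2)$ total, and the algorithm can only distinguish them by landing an oracle query on one of these ``differing'' positions.

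\textbf{Construction.} Fix $d=\lfloor n/4\rfloor$. Pick a uniformly random balanced bipartition $V=A\cup B$ and sample independent uniformly random $d$-regular graphs $H_A$ on $A$ and $H_B$ on $B$; by standard results such graphs are expanders with high probability. $\mathcal{D}_1$ outputs $H_A\cup H_B$, which is disconnected. For $\mathcal{D}_2$, sample $s=\lfloor\varepsilon n/2\rfloor$ disjoint swap pairs, each consisting of a uniformly random edge $(a_i,a_i')$ of $H_A$ and a uniformly random edge $(b_i,b_i')$ of $H_B$; delete the $2s$ internal edges and add the $2s$ cross edges $(a_i,b_i),(a_i',b_i')$. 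Both outputs are $d$-regular, so the degree oracle is uninformative; the edge-oracle neighbor orderings are declared to be i.i.d.\ uniform random permutations as part of the input distribution. In $\mathcal{D}_2$ the $(A,B)$ cut has size $2s=\varepsilon n$, giving ratio $2\varepsilon$; using constant edge-expansion of $H_A,H_B$ (which survives the removal of $O(\varepsilon n)$ edges), a short case analysis over $(|U\cap A|,|U\cap B|)$ shows every $U$ of size at most $n/2$ has ratio $\Omega(\varepsilon)$, so the sparsest cut of $\mathcal{D}_2$ is $\Theta(\varepsilon)$.

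\textbf{Lower bound via coupling.} Couple $\mathcal{D}_1$ and $\mathcal{D}_2$ on the shared randomness so that the two graphs agree everywhere except at the $4s=2\varepsilon n$ swap-affected vertices, each of which has a single neighbor that differs between the two graphs. Couple the neighbor orderings so that the $d-1$ common neighbors of a swap-affected vertex occupy the same positions and the differing neighbor occupies a uniformly random position in $[d]$. An edge-oracle query $\mathcal{O}_3(v,i)$ returns different answers in the two coupled graphs precisely when $v$ is swap-affected and $i$ equals the random swap position. For any adaptive algorithm that has made $k$ queries of $v$ all receiving non-distinguishing answers, a short Bayesian calculation gives that the posterior probability $v$ is swap-affected equals $2\varepsilon\cdot(d-k)/d$ while the posterior of the swap position on the $d-k$ unqueried slots is uniform; the next query on $v$ therefore reveals a distinguishing answer with probability $2\varepsilon(d-k)/d\cdot 1/(d-k)=\Theta(\varepsilon/n)$, independent of $k$. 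A union bound over $Q$ adaptive queries then gives total-variation distance $O(Q\varepsilon/n)$ between the algorithm's transcripts on $\mathcal{D}_1$ and $\mathcal{D}_2$, so distinguishing with constant probability forces $Q=\Omega(n/\varepsilon)$.

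\textbf{Main obstacle.} The clean cancellation of posteriors above depends crucially on the degrees matching between $\mathcal{D}_1$ and $\mathcal{D}_2$ (so the degree oracle cannot be used to pre-localize the swap-affected set) and on the neighbor orderings being uniformly random. A secondary, standard technical step is verifying the $\Omega(1)$ edge-expansion of random $d$-regular graphs after removal of $O(\varepsilon n)$ random edges, which is what guarantees the sparsest cut of $\mathcal{D}_2$ is $\Theta(\varepsilon)$ rather than smaller.
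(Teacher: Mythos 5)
Your proposal is correct and follows the same underlying strategy as the paper: build two distributions of regular graphs (so the degree oracle is mute), one disconnected and one with sparsest cut $\Theta(\varepsilon)$, in which the only distinguishing feature is a set of $\Theta(\varepsilon n)$ cross-edges hidden among $\Theta(n^2)$ slots, so each edge-oracle query detects a discrepancy with probability $O(\varepsilon/n)$. The gadget, however, is genuinely different. The paper reuses the $G_{k,p}$ construction from Lemma \ref{lem:Gkp}, augments it with cliques on $V^1,\dots,V^4$ to get expansion, and compares $G^\varepsilon$ (with $p=\varepsilon/n$) against $G^0$; you instead take two random $d$-regular expanders $H_A,H_B$ and perform $\Theta(\varepsilon n)$ degree-preserving edge swaps to create the cross cut. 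Both are fine; the paper's choice buys a shared lemma across Theorems \ref{thm:lowerbound} and \ref{thm:lowerbound-1}, while yours is self-contained and arguably cleaner to analyze because the ``differing positions'' are explicit. On the indistinguishability side you are actually more careful than the paper: the paper asserts in one line that each query finds a cross-edge with probability $O(\varepsilon/n)$, whereas you make the coupling explicit, couple neighbor orderings so the discrepancy sits at a uniformly random slot, and carry out the Bayesian bookkeeping needed to handle adaptive queries. The two details you flag as remaining technical obligations --- that the sparsest cut of $\mathcal{D}_2$ is $\Theta(\varepsilon)$ (which requires the case split over $(|U\cap A|,|U\cap B|)$ plus robustness of expansion after deleting $O(\varepsilon n)$ edges, which is fine since no vertex loses more than $O(\log n)$ of its $\Theta(n)$ edges w.h.p.), and that the vertex-oracle labeling is a uniformly random bijection --- are exactly the right things to check and are analogous to what the paper leaves implicit in its ``with high probability'' claim, so you have not introduced any new gap.
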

\begin{proof}
Let $G^{\varepsilon}=G_{10n,\varepsilon n^{-1}}$ defined in Lemma
\ref{lem:Gkp}. Put a complete graph inside $V^{1}$, $V^{2}$, $V^{3}$,
$V^{4}$ regions of $G^{\varepsilon}$ as defined in Lemma \ref{lem:Gkp}.
With high probability, we have $\min_{|U|<\frac{n}{2}}\cut{}U/|U|=\Theta\left(\varepsilon\right).$

Since $G^{\varepsilon}$ is a regular graph with same degree for all
$\varepsilon$, the degree oracle does not provide any information.
To distinguish between $G^{\varepsilon}$ and $G^{0}$, the algorithm
need to call the edge oracle until it found an edge from $V^{1}\cup V^{3}$
to $V^{2}\cup V^{4}$. Since the probability of finding such edge
is $O(\varepsilon n^{-1})$, it takes $\Omega\left(\frac{n}{\varepsilon}\right)$
queries to distinguish between $G^{\varepsilon}$ and $G^{0}$.
\end{proof}

\section{Acknowledgments}

We thank Ronitt Rubinfeld for many helpful conversations. This work
was partially supported by NSF awards 0843915 and 1111109, and Hong
Kong RGC grant 2150701.

\bibliographystyle{plain}
\bibliography{paper}

\section*{Appendix}
\begin{lem}
\label{lem:binomial}Let $0\leq p\leq1/4$ and $n$ be an integer
such that $pn\geq36$. Let $X\sim B\left(p,n\right)$. Then, for any
$\theta$, we have 
\[
\mathbb{P}\left(\left|X-\theta\right|\geq\frac{1}{2}\sqrt{pn}\right)\geq0.01.
\]
\end{lem}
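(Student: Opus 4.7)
The plan is to use the normal approximation to the binomial to rule out strong concentration of $X$ in any window of half-width $\tfrac{1}{2}\sqrt{pn}$. The point is that with $p\le 1/4$, the variance of $X$ is $\sigma^2=np(1-p)\ge \tfrac{3}{4}pn$, so $\sigma\ge \tfrac{\sqrt{3}}{2}\sqrt{pn}$; hence $\tfrac{1}{2}\sqrt{pn}\le \sigma/\sqrt{3}$. A window of this width cannot hold much more than a constant fraction of the Gaussian mass, no matter how we position it.

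First, I would invoke the Berry--Esseen theorem applied to $X=\sum_{i=1}^n Y_i$ with $Y_i$ i.i.d.\ Bernoulli$(p)$. A direct calculation gives $\mathbb{E}|Y_i-p|^3 = p(1-p)[p^2+(1-p)^2]$, so $\mathbb{E}|Y_i-p|^3/\sigma_i^3 \le 1/\sqrt{p(1-p)}$, and Berry--Esseen then yields
\[
\sup_{t\in\mathbb{R}}\Bigl|\mathbb{P}(X\le t)-\Phi\bigl((t-np)/\sigma\bigr)\Bigr| \;\le\; \frac{C}{\sqrt{np(1-p)}} \;\le\; \frac{C'}{\sqrt{pn}}
\]
for absolute constants $C,C'$.

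Next, I would bound $\mathbb{P}(|X-\theta|<\tfrac{1}{2}\sqrt{pn})$ by writing it as the difference of two CDF values, applying Berry--Esseen to each, and using the trivial fact that the Gaussian mass in any interval of fixed length is maximized when the interval is centered at the mean. With center $\theta=np$ and half-width $\tfrac{1}{2}\sqrt{pn}/\sigma\le 1/\sqrt{3}$, the Gaussian mass is at most $2\Phi(1/\sqrt{3})-1<0.44$. Combining this with the Berry--Esseen error (twice the one-point error, so at most $2C'/\sqrt{pn}\le 2C'/6$ under the hypothesis $pn\ge 36$),
\[
\mathbb{P}\bigl(|X-\theta|<\tfrac{1}{2}\sqrt{pn}\bigr) \;\le\; 0.44 + \frac{2C'}{\sqrt{pn}},
\]
which with the standard Berry--Esseen constant is comfortably bounded away from $1$. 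Taking complements gives the claimed $\ge 0.01$ with a lot of room.

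The main thing to watch is the size of the Berry--Esseen error: we need it small enough that the $0.44$ Gaussian bound is not swamped. The hypothesis $pn\ge 36$ is exactly strong enough to make this work with the sharp Berry--Esseen constant (known to be well below $0.5$); indeed one obtains a margin of several tenths rather than the required $0.01$. A fully elementary alternative, avoiding Berry--Esseen entirely, is to use the well-known mode bound $\max_k \mathbb{P}(X=k)\le 1/\sqrt{2\pi np(1-p)}\,(1+o(1))$ from Stirling and multiply by the number of integers in the window, giving $\mathbb{P}(|X-\theta|<\tfrac{1}{2}\sqrt{pn}) \le (\sqrt{pn}+1)\cdot O(1/\sqrt{np(1-p)}) = O(1/\sqrt{1-p})$, which for $p\le 1/4$ is again a fixed constant strictly less than $1$; this is the fallback if one wants to keep all constants explicit.
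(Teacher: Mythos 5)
Your proof is correct, and it takes a genuinely different route from the paper's. The paper proves this lemma by a purely elementary second-moment argument: Chernoff gives that only $\le 0.01pn$ of the variance $\mathrm{Var}(X)=p(1-p)n\ge\frac34 pn$ can live beyond $6\sqrt{pn}$ from the mean, so at least $0.74pn$ of it must be contributed by the band $\frac12\sqrt{pn}\le|X-pn|<6\sqrt{pn}$; but each point in that band contributes at most $36pn$, which forces its total probability mass $U$ to satisfy $36U\,pn+(1-U)\frac{pn}{4}\ge 0.74pn$, i.e.\ $U\ge 0.01$. Your argument instead invokes Berry--Esseen to replace the binomial by a Gaussian and then uses the trivial fact that any interval of half-width $\le \sigma/\sqrt3$ holds at most $2\Phi(1/\sqrt3)-1<0.44$ of the Gaussian mass. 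Both are valid; the trade-off is that the paper's argument is entirely self-contained and uses only a standard multiplicative Chernoff bound, whereas yours is conceptually cleaner and yields a much stronger constant ($\ge 0.37$ rather than $\ge 0.01$) but leans on the explicit value of the Berry--Esseen constant (any $C<1.68$ suffices here, so this is comfortably known). One caveat: your ``fully elementary alternative'' via the mode bound is not actually pinned down as stated --- the $(1+o(1))$ is not quantified, and with the sharp constant $1/\sqrt{2\pi np(1-p)}$ you get roughly $0.54$ plus unquantified lower-order terms, so you would need to make the Stirling estimate explicit (e.g.\ via $\binom{n}{k}p^k(1-p)^{n-k}\le\sqrt{n/(2\pi k(n-k))}$ and a bound on how far the mode can be from $np$) before calling that a complete fallback; as a sanity check it is fine, but as a rigorous alternative it needs more work than the Berry--Esseen route. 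Finally, note that both you and the paper implicitly use that the probability of a fixed-width window is maximized when centered near the mean --- you justify it for the Gaussian (where it is immediate), while the paper asserts it for the binomial ``because of the shape''; for the binomial this is really a statement about the mode, which can differ from $np$ by up to $1$, but this plays no role at the scale $\frac12\sqrt{pn}\ge 3$.
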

\begin{proof}
Note that for any $\theta$, we have 
\[
\mathbb{P}\left(\left|X-\theta\right|\geq\frac{1}{2}\sqrt{pn}\right)\geq\mathbb{P}\left(\left|X-pn\right|\geq\frac{1}{2}\sqrt{pn}\right)
\]
because of the shape of the binomial distribution. Hence, it suffices
to prove the bound for $\mathbb{P}\left(\left|X-pn\right|\geq\frac{1}{2}\sqrt{pn}\right)$. 

Using Chernoff bound, for any $k\geq6$, we have
\begin{eqnarray*}
\mathbb{P}\left(\left|X-pn\right|\geq k\sqrt{pn}\right) & \leq & 2\exp\left(-\frac{k^{2}}{2+\frac{k}{\sqrt{pn}}}\right)\\
 & \leq & 2\exp\left(-2k\right).
\end{eqnarray*}
Hence, for $k\geq6$, we have
\begin{eqnarray*}
\int_{|x-pn|\geq k\sqrt{pn}}(x-pn)^{2}dP(x) & = & 2k^{2}pn\mbox{\ensuremath{\mathbb{P}}}\left(X\geq pn+k\sqrt{pn}\right)\\
 &  & +4\int_{x\geq pn+k\sqrt{pn}}(x-pn)\mbox{\ensuremath{\mathbb{P}}}\left(X\geq x\right)dx\\
 & \leq & 2k^{2}pn\exp\left(-2k\right)+4\int_{k\sqrt{pn}}^{\infty}x\exp\left(-2\frac{x}{\sqrt{pn}}\right)dx\\
 & = & (2k^{2}+2k+1)pn\exp\left(-2k\right).
\end{eqnarray*}
Put $k=6$, we have $\int_{|x-pn|\geq6\sqrt{pn}}(x-pn)^{2}dP\leq0.01pn$.
Since the $\text{Var}(X)=\int(x-pn)^{2}dP=p(1-p)n\geq\frac{3}{4}pn$,
we have
\begin{eqnarray*}
\int_{|x-pn|<6\sqrt{pn}}(x-pn)^{2}dP & \geq & 0.74pn.
\end{eqnarray*}
Let $U=P\left(\left|X-pn\right|\geq\frac{1}{2}\sqrt{pn}\right)$,
then we have 
\begin{eqnarray*}
36Upn+(1-U)\frac{pn}{4} & \geq & \int_{|x-pn|<6\sqrt{pn}}(x-pn)^{2}dP\\
 & \geq & 0.74pn.
\end{eqnarray*}
Hence, we have $U\geq0.01$.\end{proof}

\end{document}